\pgfplotsset{compat = newest}
\theoremstyle{definition}
\newtheorem{example}{Example}
\theoremstyle{theorem}
\newtheorem{proposition}{Proposition}
\newtheorem{theorem}{Theorem}
\theoremstyle{definition}
\theoremstyle{remark}
\newcommand{\Dc}{\mathcal{D}}
\newcommand{\Fc}{\mathcal{F}}
\newcommand{\Hc}{\mathcal{H}}
\newcommand{\Ic}{\mathcal{I}}
\newcommand{\Jc}{\mathcal{J}}
\newcommand{\Kc}{\mathcal{K}}
\newcommand{\Lc}{\mathcal{L}}
\newcommand{\Pc}{\mathcal{P}}
\newcommand{\Rc}{\mathcal{R}}
\newcommand{\Uc}{\mathcal{U}}
\newcommand{\Xc}{\mathcal{X}}
\newcommand{\As}{\mathscr{A}}
\newcommand{\Ns}{\mathscr{N}}
\newcommand{\Vs}{\mathscr{V}}
\newcommand{\Cb}{\mathbb{C}}
\newcommand{\Eb}{\mathbb{E}}
\newcommand{\Rb}{\mathbb{R}}
\newcommand{\Tb}{\mathbb{T}}
\newcommand{\Bf}{\mathfrak{B}}
\newcommand{\Df}{\mathfrak{D}}
\newcommand{\es}{\mathcal{J}}
\newcommand{\um}{\frac{1}{2}}
\newcommand{\one}{\mathbb{1}}
\newcommand{\onev}{{\bm 1}}
\newcommand{\ad}{{\rm ad}}
\newcommand{\tr}{{\rm tr}}
\newcommand{\Span}{{\rm span}}
\newcommand{\ket}[1]{\left| #1 \right>}
\newcommand{\bra}[1]{\left< #1 \right|}
\newcommand{\norm}[1]{\left|\left| #1 \right|\right|}
\newcommand{\inner}[2]{\left< #1,#2 \right>}
\newcommand{\expect}[1]{\left< #1 \right>}
\newcommand{\ketbra}[2]{\left\vert #1 \middle>\middle< #2 \right\vert}
\definecolor{darkviolet}{rgb}{0.58, 0.0, 0.83}
\definecolor{lavender}{rgb}{0.45, 0.31, 0.59}
\NewDocumentCommand{\cbox}{s O{1ex}}{%
  \setlength{\fboxsep}{-\fboxrule}%
  \IfBooleanTF{#1}
    {\frame{\rule{0.5\dimexpr#2}{0.5\dimexpr#2}\rule[0.5\dimexpr#2]{0.5\dimexpr#2}{0.5\dimexpr#2}}}
    {\fbox{\rule[0.5\dimexpr#2]{0.5\dimexpr#2}{0.5\dimexpr#2}\rule{0.5\dimexpr#2}{0.5\dimexpr#2}}}
  \xspace
}
\begin{document}

\title{Computing time-dependent reduced models for classical and quantum dynamics}

\author{Tommaso Grigoletto}
\affiliation{Department of Information Engineering, University of Padova, Italy} 

\date{\today}

\begin{abstract}
    This paper introduces a novel method for approximating the dynamics of a large autonomous system projected onto a fixed subspace. The core contribution is a novel recursive algorithm to construct an effective time-dependent generator that is polynomial in the time variable, ensuring accuracy for short time scales. The derivation is based on the Taylor expansion of the exponential map and a new result for computing the time-ordered exponential of polynomial generators.
    This work is motivated by the challenge of deriving time-convolutionless master equations in quantum physics and the proposed method offers an alternative to typical derivations based on expansions in the coupling strength. 
    The resulting approximation is accurate for small times, does not require a weak-coupling assumption, performs better than a truncation of the exponential map at low orders, and crucially, guarantees a completely positive and trace-preserving map at the lowest orders. The proposed method is validated against several prototypical models: a dephasing spin-boson model, a central spin model, and an Ising spin chain.
\end{abstract}

\maketitle

\section{Introduction}
Whenever a dynamical model is too large to be simulated efficiently, one should resort to model reduction techniques to find a smaller model that is simulatable and that retains fundamental characteristics of the original system. A dominant approach in control systems theory -- including, for example, minimal realization techniques \cite{rosenbrock1970state}, balanced truncation \cite{antoulas,gugercin2004survey}, and moment matching \cite{astolfi_moment_matching,padoanmomentmatching} -- consists of finding a low-dimensional subspace where to project the dynamics, obtaining a linear time-invariant (LTI) model. This work considers a different approach: the space onto which the dynamics is projected is assumed to be {\em given a priori and fixed}, and, instead, the dynamics is approximated by seeking a linear time-varying (LTV) model. 

This approach is inspired by quantum physics. 
To obtain realistic models of open quantum systems, one shall consider the effects of the interaction between the system and its environment \cite{breuer2002theory}. Unfortunately, simulating the system-environment couple is impractical (if not impossible), as the scaling of the Hilbert space makes such a simulation prohibitive even for small environments. This implies that, in order to study the effect of the coupling with the environment, one needs to find a reduced model that (hopefully) captures the interesting features of the interaction.

A common method to reduce the joint system-environment model is known in the physics literature with the name of time-convolutionless master equations (TCL ME) \cite{shibata_generalized_1977,chaturvedi_time-convolutionless_1979}. This method aims at obtaining an LTV model of the same size of the system that replicates the effects of the interaction with the environment. Although a formal exact solution to this problem exists, computing this exact LTV generator is generally considered as difficult as solving the original model. The main approach found in the physics literature approximates this formal solution by leveraging an expansion in the strength of the coupling between the system and the environment, which is typically assumed to be small. For more details, see  \cite{breuer2002theory} and references therein. 

In this work an alternative approach is proposed: instead of an expansion in the coupling strength, an expansion in time is considered. In particular, since the formal solution provides an analytic LTV generator in the time variable $t$, the exact incomputable generator is approximated with its truncation to a finite order $N$.  Under these assumptions a recursive formula for the coefficients of the polynomial generator is found, and the approximation error of the resulting reduced model scales as $O(t^{N+1})$. 

Note that while time-dependent dynamical systems are typically more difficult to study than time-independent ones, this approach finds an important application in the simulation of very large systems, as the Runge-Kutta methods allow one to simulate time-dependent models \cite{ascher1997implicit}. In this regard, under the assumption of a time-dependent generator that is {\em polynomial} in $t$, this paper proposes a novel recursive formula to compute the propagator of this LTV model.

Another benefit of the proposed reduction method consists in the fact that, when computing the reduced LTV generator for unitarily evolving quantum systems, the reduced polynomial generator at first and second order is provably in Lindblad form for all times $t$ thus ensuring that the reduced model at the lowest orders generates a completely positive (CP) and trace preserving (TP) semigroup.   

It is important to point out that the method proposed in this work is not intended to substitute the perturbative approach for approximating TCL MEs \cite{shibata_generalized_1977,chaturvedi_time-convolutionless_1979} but rather to provide an alternative that can be of use whenever the main assumptions of the standard method do not hold, e.g. in the strong-coupling regime or whenever the time-ordered integrals are difficult to assess. 

\textbf{Structure and paper contributions:} For the reader's convenience, the structure of this work and its main contributions are summarized next:  
\begin{itemize}
    \item In Sec. \ref{sec:problem_setting} the problem treated in this work is defined rigorously. The known formal solution of the problem is derived in detail using the Nakajima-Swanzig equations \cite{nakajima,zwanzig},  and the key difficulties associated with computing this LTV model are explained.  
    \item In Sec. \ref{sec:time_ordered} the main contribution of the paper is presented: Theorem \ref{thm:approx} derives the polynomial approximation of the formal LTV model and the recursive equation that determines its coefficients. This result is based on Theorem \ref{thm:time_ordered_expansion} where a novel method to compute the time-ordered exponential is derived. The reduction procedure is also validated on a linear model and compared against the truncation of the exponential series. 
    \item Sec. \ref{sec:cptp_second_order} focuses on the application of the proposed reduction procedure to quantum dynamical models. The proposed method is connected and compared with the typical physics approach for the approximation of TCL ME which is based on the expansion of the coupling strength. The connections between the recursive formula obtained here and other similar recursive equations known in the physics literature \cite{nestmann2019time,cerrillo2014non,cygorek2025timenonlocalversustimelocallongtime,gasbarri2018recursive} are discussed, highlighting how the expansion in $t$ greatly simplifies the associated calculations. Most importantly, leveraging a few recent results \cite{grigoletto2024exact,grigoletto2025quantummodelreductioncontinuoustime} it is proven that when the original generatos is purely Hamiltonian, the proposed approximation at first and second order provides a CPTP two-parameter semigroup.
    \item In Sec. \ref{sec:applications} the proposed procedure is validated against several prototypical quantum examples: a dephasing spin boson model, a central spin model, and and Ising spin chain. Note that the majority of the applications presented in this work are in the context of quantum physics since, up to this point, this is where these LTV-based model reduction techniques have found most applications.  
\end{itemize}
Throughout the paper, pedagogical examples are included to intuitively explain the most involved proofs. Furthermore, both the strengths and weaknesses of the proposed approximation procedure are highlighted in detail, namely: the recursive formula used to compute the terms of the generators at different orders makes the procedure numerically easier to access at higher orders than the alternatives; allows for an initial model that is in Lindblad form (not only purely Hamiltonian); does not require the model to be put in rotating frame; and does not rely on a weak coupling assumption, thus allowing to study systems in strong coupling regimes. On the other hand, the proposed reduction procedure relies on an expansion in time and thus provides a good approximation only for small times. Although this aspect is limiting, similar restrictions also hold in the well-known approximations of the TCL ME proposed by \cite{shibata_generalized_1977}.

\section{Problem setting and connection with the literature}
\label{sec:problem_setting}
\subsection{Problem setting}

In this work, we assume to be given an LTI stable autonomous model of the type 
\begin{equation}
    \dot{x}_t = L x_t
    \label{eq:original_model_linear}
\end{equation}
where $x_t\in\Cb^n$ is the system's state and $L\in\Cb^{n\times n}$ is the state matrix or generator. Such a generator defines a one-parameter semigroup, i.e. $\{\Lambda_t \equiv e^{L t}\}_{t\geq 0}$ such that $\Lambda_{t+s} =\Lambda_t\Lambda_s$, for all $t,s\geq0$.

We next assume that, for some reason dictated by the application at hand, we are not interested in reproducing all the degrees of freedom contained in the system's state $x_t$, but we are only interested in the degrees of freedom contained on a given subspace $\Vs\subseteq\Cb^n$ of dimension $m=\dim(\Vs)$. Importantly, here we assume the subspace to be {\em given and fixed} $\Vs$.
Let then $P\in\Cb^{n\times n}$, be a projector $P^2=P$ onto $\Vs={\rm Im}P$. Note that $P$ does not need to be orthogonal (with respect to the standard inner product $\langle\cdot,\cdot\rangle$), i.e. $P^\dag \neq P$. For future convenience, let us define $Q\equiv \one-P$ where $\one$ denotes the identity matrix and factorize $P$ into two linear maps: the {\em reduction} $R\in\Cb^{m\times n}$ and the {\em injection} $J\in\Cb^{n\times m}$ such that $P=JR$ and $RJ = \one_m$ the identity in $\Cb^{m\times m}$. 

We further make a simplifying assumption that is quite typical in these settings; see, e.g. \cite{breuer2002theory}: We assume that $x_0\in\Vs$, or equivalently $Px_0 = x_0$.

\begin{figure}
    \centering
    \includegraphics[width=\linewidth]{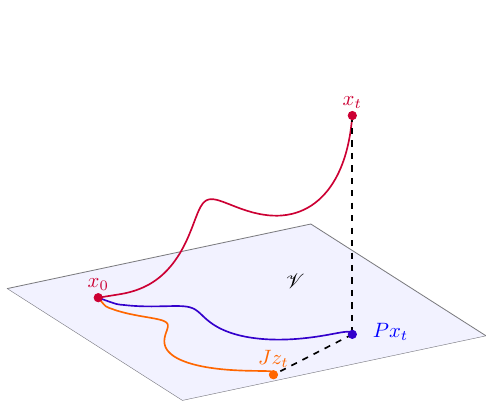}
    \caption{Graphical representation of the approximation problem. The pale blue plane represents the subspace $\Vs$ onto which $P$ projects. The red curve represents the trajectory of the original model $x_t$, which starts in $\Vs$ at $t=0$.  The blue curve represents the projection of the true trajectory $Px_t$, i.e. the one we want to approximate. Finally, the orange curve represents the approximate evolution we aim to find, i.e. $Jz_t$.}
    \label{fig:schematic}
\end{figure}

In summary, we have the model
\[\begin{cases}
\dot{x}_t = L x_{t}\\
y_t = R x_t
\end{cases}, \quad x_0\in\Vs\]
where we defined the output signal $y_t\in\Rb^m$.

In this work, we are interested in deriving a reduced model that is capable of reproducing the trajectories \(y_t = Pe^{Lt}x_0,\quad \forall x_0\in\Vs, \quad \forall t\geq0\)
or, equivalently,
\begin{equation}
    y_t = R e^{Lt} J \check{x}_0,\quad \forall \check{x}_0\in\Cb^m, \quad \forall t\geq0.
\end{equation}

Note that, when $\Vs$ is $L$-invariant, the problem has a trivial solution since by projecting the generator $L$ onto $\Vs$ one obtains the reduced generator $RLJ$ that exactly reproduces all the trajectories of interest.   
One can obtain an exact and minimal LTI reduced model by computing the orthogonal to the non-observable subspace $\Ns^\perp$ as the smallest $L^\dag$-invariant subspace that contains $\Vs$ and projecting the dynamics onto it, thus obtaining the smallest observable LTI model \cite{rosenbrock1970state}. However, the drawback of this approach is the fact that if the couple $(L,R)$ is observable, the model can not be reduced exactly. In those cases, one should recur to approximate model reduction techniques such as balanced truncation \cite{antoulas,gugercin2004survey} and moment matching \cite{astolfi_moment_matching,padoanmomentmatching}.
In those cases, one aims to find a subspace where to project the dynamics in order to obtain an LTI model that approximates the trajectories of interest in a given metric. 

In this work, we take a different route: We fix the subspace where the dynamics is projected to $\Vs$ and lift the time-independence constraint by looking for an LTV model of dimension $m$ that approximates the trajectories of interest. Note that while this approach is not common in the control literature, this has been studied in the physics literature for the derivation of TCL ME as detailed in Sec. \ref{sec:quantum_connection}.

More rigorously, let us define a state $z_t\in\Cb^m$ whose evolution is given by an LTV model of the type
 \begin{equation}
    \begin{cases}
        \dot{z}_t = F_t z_t\\
        z_0 = R x_0
    \end{cases}
 \end{equation}
 where $F_t\in\Cb^{n\times n}$ is a time-dependent generator. The evolution of such an LTV model is given by the two-parameter semigroup $\{\Upsilon_t \equiv \Tb e^{\int_0^t F_s ds} \}_{t\geq0}$ for which the composition rule $\Upsilon_{t,0}=\Upsilon_{t,s}\Upsilon_{s,0}$ for all $0\leq s\leq t$ holds. Here $\Tb$ denotes the time-ordering operator, and $\Tb e$ denotes the time-ordered exponential or, equivalently, the Volterra series \cite{lawrence1979volterra,Giscard_2015}.  

Our aim is then to construct $F_t$ so that $z_t\approx y_t$ or, in other words, we want to determine $F_t$ so that \[\Tb e^{\int_0^t F_s ds} \approx R e^{L t} J.\] The sense in which the approximation $\approx$ is intended shall be defined more precisely later on in the paper. A graphical representation of the approximation problem is given in Fig.\,\ref{fig:schematic}.   

\subsection{Reduced and exact time-dependent dynamics}

Before moving on to the method proposed in this work for computing the time-dependent generator $F_t$ we shall discuss the common derivation found in the physics literature to solve the problem at hand \cite{Tokieda_2025, chaturvedi_time-convolutionless_1979, nestmann_time-convolutionless_2019, shibata_generalized_1977}. The derivation is particularly elegant hence we report it here for the interested reader. 

Dividing the evolution of $x_t$ in the evolution of $P x_t$ and that of $Q x_t$ we have 
\begin{align}
    \frac{d}{dt}P x_t &= P LP x_t + PLQ x_t,\label{eq:P_rho_t}\\
    \frac{d}{dt}Q x_t &= QLP x_t + QLQ x_t.
\end{align}
Solving the second equation for $Q x_t$ with the Lagrange equation, we obtain
\[Qx_t = e^{QLQ t} Qx_0 + \int_0^t e^{QLQ s }QLP x_{t-s} ds.\]
By substituting this equation into Eq. \eqref{eq:P_rho_t} we obtain the famous Nakajima-Zwanzig equation \cite{nakajima,zwanzig}:
\begin{align}
    \frac{d}{dt}P x_t = PLP x_t &+ PL e^{QLQ t} Qx_0 \nonumber\\&+\int_0^t PLQ e^{QLQ s }QLP x_{t-s} ds.
\end{align}
We then recall that 
\(x_{t-s} = e^{-L s}x_t\)
so that 
\[Qx_t = e^{QLQ t} Qx_0 +\int_0^t e^{QLQ s }QLP e^{-L s} ds\, \cdot\, (P + Q) x_t. \]
Now, defining 
\begin{equation}
    M_t \equiv \one - \int_0^t e^{QLQ s }QLP e^{-L s} ds
    \label{eq:M_t}
\end{equation}
we have 
\[M_t Qx_t = e^{QLQ t} Qx_0 + (\one-M_t) Px_t.\]
If we then assume that $M_t$ is invertible, we obtain 
\[Qx_t = M_t^{-1}e^{QLQ t} Qx_0 + (M_t^{-1} -\one) P x_t.\]
Substituting this into Eq. \eqref{eq:P_rho_t} we obtain 
\begin{align}
    \frac{d}{dt}Px_t &= PLP x_t + PLM_t^{-1}e^{QLQ t} Qx_0 \nonumber \\& \qquad\qquad\qquad\qquad+ PL (M_t^{-1}-\one) P x_t \nonumber\\ &= PLM_t^{-1}e^{QLQ t} Qx_0 + PLM_t^{-1} Px_t.
    \label{eq:TCL_ME}
\end{align}
Note that under the assumption $Px_0=x_0$ we have $Qx_0=0$ and thus Eq. \eqref{eq:TCL_ME} gets simplified to 
\begin{equation}
  \frac{d}{dt}Px_t = \underbrace{PLM_t^{-1} P}_{\equiv K_t} \rho_t  
  \label{eq:tcl_me_no_affine}
\end{equation}
which, separating the projector $P$ into its factors $P = JR$, becomes  
\begin{equation}
    \frac{d}{dt}\check{x}_t = \underbrace{RL M_t^{-1} J}_{\equiv F_t} \check{x}_t.
    \label{eq:tcl_me_reduced}
\end{equation}

A few comments on this solution are in order:\\ 
i) The invertibility of $M_t$ needs to be verified and typically only holds fot small $t$, (see, e.g. \cite{breuer2002theory}) but for the times where $M_t$ is invertible, Eq. \eqref{eq:tcl_me_reduced} provides an exact solution to the problem at hand. \\
ii) Note that, to compute $F_t$ (or equivalently $K_t$) one needs to first compute the integral that defines $M_t$ and then compute the inverse of $M_t$. Both these tasks are challenging (numerically and analytically) and it is generally considered that solving this integral and inverse is as difficult as solving the original model \eqref{eq:original_model_linear}, see e.g. \cite{breuer2007}. However, Eq. \eqref{eq:TCL_ME} provides an excellent starting point for computing an approximation of the reduced evolution. 

To avoid these numerical issues, $K_t$ is typically approximated by expanding it as an expansion in the coupling strength $\varepsilon$ and by computing the terms of the expansion as ordered cumulants \cite{breuer2002theory,breuer2006,nestmann2019time}.
In many cases of interest, this approximation provides a good approximation of the open quantum system dynamics, both in the long-time and short-time regimes. It is, however, important to note that there is no guarantee that such an approximation holds also for long times. As an example of this, see \cite{breuer2007} where it is shown that TCL2 and TCL4 (approximations at orders $N=2,4$) do not approximate the true dynamics in the long-time regime.
We refer the reader to Sec. \ref{sec:connection_with_literature} for further connections to the physics literature.

\section{Polynomial approximations of the time-dependent generator}
\label{sec:time_ordered}

In this section we derive the key result of this work: a method to approximate the time-dependent generator defined in Eq.\eqref{eq:tcl_me_reduced}, $F_t$. The derivation that follows is based on the following observation. 
\begin{proposition}
    Let $M_t$ be the time-dependent operator defined in Eq. \eqref{eq:M_t} and let $F_t$ be the time-dependent generator defined in Eq. \eqref{eq:tcl_me_reduced}. Then, for every $t$ such that $M_t$ is invertible, $F_t$ is analytic in $t$, i.e. 
    \[F_t = \sum_{k=0}^{+\infty} t^k F_{(k+1)}.\]
\end{proposition}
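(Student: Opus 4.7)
The plan is to establish analyticity of $F_t$ as a straightforward chain of three observations: analyticity of $M_t$, analyticity of the matrix inverse, and closure of analytic functions under products. Since this is a purely analytic statement, no new structural information beyond the definitions in Eq. \eqref{eq:M_t} and Eq. \eqref{eq:tcl_me_reduced} is needed.

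First, I would verify that $M_t$ is entire in $t$. The integrand $s \mapsto e^{QLQ s}\,QLP\,e^{-L s}$ is a product of two matrix exponentials with matrix coefficients, and each matrix exponential is an entire function of its scalar argument with a uniformly (on compacts) convergent power series. Multiplying them componentwise gives an entire matrix-valued function of $s$, and integrating from $0$ to $t$ term-by-term (allowed by uniform convergence on $[0,t]$) produces an entire function of $t$. Hence $M_t$ admits a globally convergent Taylor expansion $M_t = \one + \sum_{k\geq 1} t^k M_{(k)}$, which in particular shows $M_0 = \one$.

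Next, I would use the fact that matrix inversion is analytic on the open set of invertible matrices: on any $t$-interval where $\det M_t \neq 0$, the entries of $M_t^{-1}$ are given by Cramer's rule as rational functions (with nonvanishing denominator) of the analytic entries of $M_t$, hence are themselves analytic. Equivalently, one may argue via the ODE $\partial_t M_t^{-1} = -M_t^{-1}(\partial_t M_t)M_t^{-1}$, which inductively shows all derivatives of $M_t^{-1}$ exist and satisfy the Cauchy bounds needed for analyticity. This step is the only one that requires the invertibility hypothesis of the statement; at the boundary of the invertibility region, poles may develop, which is why the expansion is only claimed where $M_t$ is invertible.

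Finally, I would combine: $F_t = RL\,M_t^{-1}\,J$ is a product of a constant matrix with the analytic matrix $M_t^{-1}$ and another constant matrix, so it is analytic on the same interval. Expanding in its Taylor series around $t=0$ and relabelling the coefficients as $F_{(k+1)} \equiv \tfrac{1}{k!}\,\partial_t^{\,k} F_t\big|_{t=0}$ yields the claimed form $F_t = \sum_{k=0}^{+\infty} t^k F_{(k+1)}$. There is no real obstacle here, since every ingredient is classical; the only subtlety worth flagging is that the radius of convergence is at least the distance from $0$ to the first singularity of $M_t^{-1}$, i.e. the first time at which $M_t$ fails to be invertible, which is precisely the regime in which Eq. \eqref{eq:tcl_me_reduced} itself is meaningful.
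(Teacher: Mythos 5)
Your proposal is correct and follows essentially the same route as the paper's proof: analyticity of the integrand as a product of exponentials, analyticity of the integral, and analyticity of the inverse wherever $M_t$ is invertible. If anything, your use of Cramer's rule (or the ODE for $\partial_t M_t^{-1}$) makes the inversion step more precise for the matrix-valued case than the paper's appeal to the scalar fact that $1/f$ is analytic where $f\neq 0$.
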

\begin{proof}
    We can start by observing that the function $e^{QLQt}QLPe^{-Lt}$ is the product of two analytic functions in $t$ hence it is analytic in $t$. Furthermore, as the integral of an analytic function is also analytic, see, e.g. \cite{bourchtein2021analytic} we have that $M_t$  is also analytic in $t$. Finally, as the inverse of an analytic function $f(t)$ is analytic where $f(t)\neq0$, we have that $F_t$ is also analytic in $t$ whenever $M_t$ is invertible. 
\end{proof}
Note that we label the coefficients in the power series starting from $1$ instead of $0$ for later convenience. This insightful result allows us to: 1) focus on computing the terms of the power series $\{F_{(k+1)}\}_{k=0}^{+\infty}$ and 2) approximate $F_t$ by truncating the infinite power series to a finite order $N$. In what follows, we thus approximate $F_t$ with the {\em polynomial time-dependent generator} 
\[F_{t,N} \equiv \sum_{k=0}^N t^k F_{(k+1)}.\]
Other than an obvious computational benefit introduced by computing the terms $F_{(k)}$ only to order $N$, the assumption that the generator is polynomial in the time variable greatly simplifies the computation of the time-ordered exponential $\Tb e^{\int_0^t F_{s,N} ds}$ as shown in the next key result. 

\begin{theorem}
\label{thm:time_ordered_expansion}
    Let us assume that \(F_{t,N} = \sum_{k=0}^{N} t^k F_{(k+1)}\). 
    Then the time-ordered exponential can be written as
    \begin{equation}
        \Tb e^{\int_0^t F_{s,N} ds} = \sum_{k=0}^{+\infty} t^k E_{(k)}
        \label{eqn:time_ordered_expansion}
    \end{equation}
    where $E_{(0)}=\one$ and the other terms are defined by the recursion:
    \begin{equation}
        E_{(k)} =  \frac{1}{k} \sum_{s=1}^{k} F_{(s)}E_{(k-s)}
        \label{eqn:E_k}
    \end{equation}
    and where we used the convention $F_{(k)}=0$ for all $k>N$.
\end{theorem}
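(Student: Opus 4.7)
I would prove the theorem by characterizing the time-ordered exponential as the unique solution of an ordinary differential equation and then matching power-series coefficients.

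First, I would recall that $\Upsilon_t \equiv \Tb e^{\int_0^t F_{s,N}\,ds}$ is by definition the unique solution of the linear initial value problem $\dot{\Upsilon}_t = F_{t,N}\Upsilon_t$ with $\Upsilon_0 = \one$. Since $F_{t,N}$ is polynomial, hence entire, as a matrix-valued function of $t$, the standard theory of linear ODEs guarantees that $\Upsilon_t$ is itself entire in $t$. This justifies globally the power-series ansatz $\Upsilon_t = \sum_{k=0}^{+\infty} t^k E_{(k)}$ and all term-by-term manipulations performed on it.

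Next, I would substitute the ansatz into the ODE, differentiate term by term on the left to obtain $\sum_{n=0}^{+\infty}(n+1)t^n E_{(n+1)}$, expand the Cauchy product on the right as
\[F_{t,N}\Upsilon_t = \sum_{n=0}^{+\infty} t^n\sum_{j=0}^{\min(n,N)} F_{(j+1)}\,E_{(n-j)},\]
and equate coefficients of $t^n$ for each $n\geq 0$. A shift of indices ($s = j+1$, $k = n+1$) then immediately yields the recursion $k\,E_{(k)} = \sum_{s=1}^{k} F_{(s)} E_{(k-s)}$ for $k\geq 1$, which is Eq.\,\eqref{eqn:E_k}, provided one uses the convention that $F_{(s)}$ vanishes beyond the polynomial degree of $F_{t,N}$. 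The initial condition $\Upsilon_0 = \one$ fixes $E_{(0)} = \one$.

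The computation is essentially bookkeeping; the only point requiring a moment of care is aligning the upper limit of the Cauchy-product sum with the stated convention on $F_{(s)}$, so that the recursion can be written with the clean upper limit $s = k$ rather than $s = \min(k, N+1)$. The analytic justification of term-by-term differentiation and of the Cauchy product is not really an obstacle, since the entireness of $\Upsilon_t$ gives absolute convergence of the series on every compact interval and thus legitimizes all rearrangements used above.
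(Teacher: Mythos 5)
Your proof is correct, but it takes a genuinely different route from the paper. The paper works directly with the Dyson series $\Tb e^{\int_0^t F_s\,ds}=\sum_n\Phi_{(n),t}$, proves by induction an explicit multi-index formula for each $\Phi_{(n),t}$ as a sum over tuples $(k_1,\dots,k_n)$, regroups the terms by powers of $t$ to express $E_{(k)}$ as a sum over compositions of $k$, and then verifies the recursion by a fairly delicate re-indexing of those nested sums. You instead characterize $\Upsilon_t=\Tb e^{\int_0^t F_{s,N}\,ds}$ as the unique solution of $\dot{\Upsilon}_t=F_{t,N}\Upsilon_t$, $\Upsilon_0=\one$, note that entireness of the polynomial coefficient matrix makes the solution entire, and read off the recursion by matching coefficients in the Cauchy product $k\,E_{(k)}=\sum_{s=1}^{\min(k,N+1)}F_{(s)}E_{(k-s)}$. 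This is substantially shorter and more transparent, and it localizes the analytic content in one standard fact (analyticity of solutions of linear ODEs with analytic coefficients) rather than in the interchange of integrals and infinite sums that the paper flags at the end of its appendix as the obstruction to the analytic ($N=\infty$) case; your argument would arguably extend to entire $F_t$ with only the convergence of the coefficient series to check, which is one of the paper's stated directions for future work. What the paper's longer computation buys in exchange is the explicit closed-form expression for $E_{(k)}$ as a weighted sum over compositions $k_1+\dots+k_n=k$, which your coefficient-matching argument does not produce. One cosmetic point: you correctly state the vanishing convention as ``beyond the polynomial degree,'' i.e.\ $F_{(s)}=0$ for $s>N+1$; the theorem's phrasing ``$F_{(k)}=0$ for all $k>N$'' is off by one relative to the expansion $F_{t,N}=\sum_{k=0}^N t^k F_{(k+1)}$, so your reading is the right one.
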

Proof of this result and an explanatory example are given in Appendix \ref{sec:thechnical_results:time_ordered}.

Note that Eq. \eqref{eqn:E_k} provides a closed formula for the coefficients $E_{(k)}$ of the expansion of $\Tb e^{\int_0^t F_{s,N} ds}$ as a function of the coefficients $F_{(k)}$. Furthermore, only the terms $\{E_{(\ell)}\}_{\ell=1}^{k-1}$ appear in Eq. \eqref{eqn:E_k}, which means that this equation can be used iteratively to compute each $E_{(k)}$. 

Theorem \ref{thm:time_ordered_expansion} provides the main intuition on how to compute the coefficients $\{F_{(k)}\}_{k=1}^{N+1}$ of the polynomial time-dependent generator $F_{t,N}$: Eq. \eqref{eqn:time_ordered_expansion} provides an expansion for $\Tb e^{\int_0^t F_{s,N} ds}$ in the time variable $t$. We can then expand $e^{tL}$ in $t$ and construct $F_{(k)}$ so that the terms of $\Tb e^{\int_0^t F_{s,N} ds}$ match the terms of $R e^{tL} J$ by orders of $t$. This approach is similar in principle to the adiabatic elimination techniques \cite{adiabadic-elimination,Tokieda_2025} where, instead of matching orders of $\varepsilon$, we match orders of $t$. The resulting procedure, which represents the main contribution of this work, is stated next.

\begin{figure*}[th]
    \centering
    \begin{subfigure}{0.48\textwidth}
        \includegraphics[width=\linewidth]{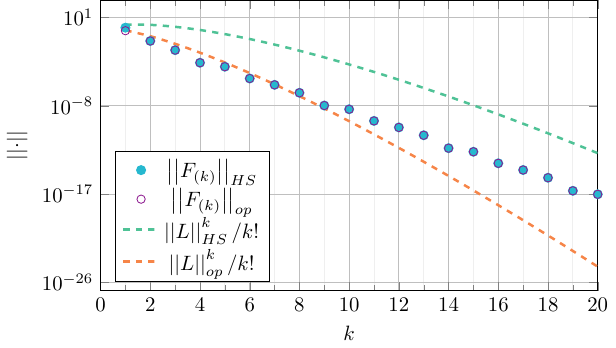}
    \end{subfigure}
    \begin{subfigure}{0.48\textwidth}
        \includegraphics[width=\linewidth]{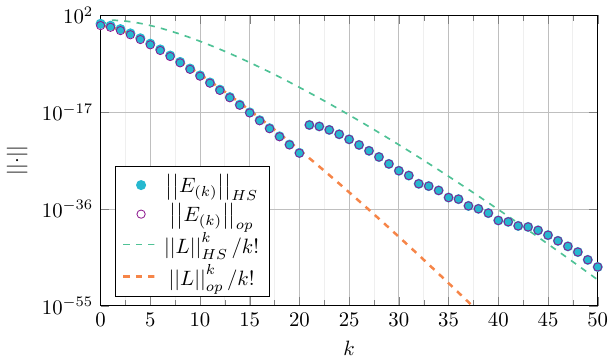}
    \end{subfigure}
    \caption{Operator and Hilbert-Schmidt norm of the first 20 $F_{(k)}$ terms (left) and of the first 50 $E_{(k)}$ terms when $N=20$ (right).}
    \label{fig:norm_bound}
\end{figure*}

\begin{theorem}
\label{thm:approx}
Let us define 
\begin{equation}
F_{t,N} \equiv \sum_{k=0}^{N} t^k F_{(k+1)}
\label{eqn:poly_generator}
\end{equation}
where 
\begin{equation}
F_{(k)} \equiv k\frac{R L^k J}{k!} - \sum_{h=1}^{k-1}F_{(k-h)}\frac{R L^h J}{h!}.
\label{eqn:F_k}
\end{equation}
Then:
\begin{align}
R e^{L t}J - \Tb e^{\int_0^t F_{s,N} ds} = O(t^{N+1}).
\label{eqn:time_approx}
\end{align}
\end{theorem}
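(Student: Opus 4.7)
The plan is to expand both $Re^{Lt}J$ and $\Tb e^{\int_0^t F_{s,N}\,ds}$ as power series in $t$ and check that their coefficients coincide through order $t^N$. The expansion of the first is immediate from the exponential series: $R e^{Lt} J = \sum_{k=0}^{\infty} t^k \, RL^k J/k!$. For the second I would apply Theorem \ref{thm:time_ordered_expansion} directly, since by construction $F_{t,N}$ is polynomial in $t$; this yields $\Tb e^{\int_0^t F_{s,N}\,ds} = \sum_{k=0}^{\infty} t^k E_{(k)}$ with $E_{(0)} = \one$ and the recursion \eqref{eqn:E_k}. The claim \eqref{eqn:time_approx} then reduces to the purely algebraic identity
\[E_{(k)} \;=\; \frac{R L^k J}{k!}, \qquad k = 0, 1, \ldots, N.\]

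I would establish this identity by strong induction on $k$. The base case $k=0$ is trivial since $E_{(0)} = \one = RJ = R L^0 J/0!$. For the inductive step, fix $1 \le k \le N+1$ and assume $E_{(j)} = RL^j J/j!$ for every $j < k$. Isolating the $s=k$ term in \eqref{eqn:E_k}, substituting the inductive hypothesis into the remaining terms, and reindexing with $h = k-s$ gives
\[k\, E_{(k)} \;=\; F_{(k)} \;+\; \sum_{h=1}^{k-1} F_{(k-h)} \, \frac{RL^h J}{h!}.\]
But the right-hand side is exactly the quantity that Eq. \eqref{eqn:F_k} equates to $k\cdot RL^k J/k!$: plugging in the definition of $F_{(k)}$ collapses the two sums and leaves $k\, E_{(k)} = k\, RL^k J/k!$, closing the induction. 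Since the defining recursion \eqref{eqn:F_k} supplies $F_{(k)}$ for all $k = 1, \ldots, N+1$, the identity in fact holds up to $k = N+1$; subtracting the two series then annihilates every coefficient of $t^k$ for $k \le N+1$, yielding a remainder which is $O(t^{N+2}) \subseteq O(t^{N+1})$. Convergence of both series in a neighbourhood of $t=0$ is automatic: the Peano--Baker / Volterra series for a polynomial generator on a compact interval is analytic, so the $O(\cdot)$ bound is a genuine analytic estimate and not merely a formal one.

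I do not anticipate any substantive obstacle: the recursion \eqref{eqn:F_k} was evidently reverse-engineered precisely so that the inductive step succeeds. The only points requiring care are bookkeeping — keeping straight the shift between the index of $F_{(k)}$ and the power of $t$ it multiplies in $F_{t,N}$, and noting that the natural cutoff of the recursion is at $k = N+1$ (so the convention $F_{(k)}=0$ appears to begin at $k > N+1$ rather than $k > N$ as written in Theorem \ref{thm:time_ordered_expansion}, though this off-by-one does not affect the stated $O(t^{N+1})$ bound).
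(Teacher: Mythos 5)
Your proposal is correct and follows essentially the same route as the paper's proof: invoke Theorem \ref{thm:time_ordered_expansion}, expand $Re^{Lt}J$ via the exponential series, and show by induction that $E_{(k)} = RL^kJ/k!$ so that the matched coefficients cancel. Your side observations are also sound: the convention in Theorem \ref{thm:time_ordered_expansion} should indeed read $F_{(k)}=0$ for $k>N+1$, and because the polynomial supplies coefficients through $F_{(N+1)}$ the matching extends to $k=N+1$, giving an error that is in fact $O(t^{N+2})$ — slightly sharper than the stated bound, whereas the paper's own induction stops at $k=N$.
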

\begin{proof}
First of all we can use Theorem \ref{thm:time_ordered_expansion} to write 
\[\Tb e^{\int_0^t F_{s,N} ds} = \sum_{f=0}^{+\infty} t^k E_{(k)}\]
where \(E_{(k)} =  \frac{1}{k} \sum_{s=1}^{k} F_{(s)}E_{(k-s)}= \frac{1}{k}\sum_{h=0}^{k-1} F_{(k-h)}E_{(h)}\) and where we reversed the order of the summation by imposing $h=k-s$. We next prove that, for $1\leq k \leq N$, we have $E_{(k)}=\frac{RL^kJ}{k!}$ by induction. The case $k=1$ is trivial as $E_{(1)} = F_{(1)}E_{(0)} = RLJ$. Let then assume that $E_{(h)} = \frac{RL^hJ}{h!}$ for all $h<k$ and let us prove that $E_{(k)}=\frac{RL^k J}{k!}$. Then
\begin{align*}
    E_{(k)} &= \frac{1}{k}\sum_{h=0}^{k-1} F_{(k-h)}E_{(h)} = \frac{F_{(k)}}{k} + \frac{1}{k}\sum_{h=1}^{k-1}F_{(k-h)}E_{(h)}\\
    &= \frac{F_{(k)}}{k} + \frac{1}{k}\sum_{h=1}^{k-1}F_{(k-h)}\frac{RL^h J}{h!}
    = \frac{R L^k J}{k!}
\end{align*}
where we simply substituted the definition of $F_{(k)}$ given in eq. \eqref{eqn:F_k}.

By leveraging the definition of the exponential map $e^{L t}$ we can expand the term $R e^{L t}J$ to obtain
\begin{align*}
    R e^{L t} J = \sum_{k=0}^{+\infty} \frac{t^k}{k!} RL^k J.
\end{align*}
Combining the two terms we obtain
\begin{align*}
    R e^{L t}J - \Tb e^{\int_0^t F_{s,N} ds} &= \sum_{k=0}^{+\infty}t^k\left(\frac{RL^k J}{k!} - E_{(k)}\right) \\
    &= \sum_{k=0}^{N}t^k\cancel{\left(\frac{RL^k J}{k!} - E_{(k)}\right)}\\ &\qquad\qquad + O(t^{N+1})
\end{align*}
which concludes the proof.
\end{proof}

We shall observe a few key aspects. 
First, Eq. \eqref{eqn:time_approx} finally gives sense to the approximation symbol $\approx$ used in Sec. \ref{sec:problem_setting} i.e. an approximation of the type $O(t^{N+1})$. This implies that the approximation holds only for small times. 
Second, it is important to notice that the right-hand side of equation \eqref{eqn:F_k} depends on $RL^{k}J$ and $\{F_{(s)}\}_{s=1}^{k-1}$. This means that equation \eqref{eqn:F_k} can be used iteratively to construct the terms $F_{(k)}$ starting from $k=1$ and growing up to $k=N$. 
For the reader's convenience the first few terms of the series are reported next:
\begin{align*}
    F_{(1)} &= RLJ\\
    F_{(2)} &= RL^2J - F_{(1)}RLJ = RL^2J - (RLJ)^2\\
    F_{(3)} &= \frac{RL^3J}{2} - F_{(2)}RLJ - F_{(1)}\frac{RL^2J}{2} \\
    &= \frac{RL^3J}{2} - \frac{F_{(1)}F_{(2)}}{2} - \frac{F_{(1)}^3}{2} - F_{(2)}F_{(1)}
\end{align*}
It is interesting to notice a couple of points about the first few terms: i) As one would expect, the term of order one $F_{(1)}$ coincides with the reduction of the original dynamics which is often used as rough time-independent approximation of the reduced evolution \cite{antoulas}; ii) The second order term does not only contain the reduction of the generator squared $RL^2J$ but also contains a correction term depends on the first order term $F_{(1)}$.

Lastly, it is important to note that similar (yet distinct) recursive formulas to derive similar approximations have been found in the physics literature \cite{nestmann2019time,cerrillo2014non,cygorek2025timenonlocalversustimelocallongtime,gasbarri2018recursive}.  We refer the reader to Sec.\ref{sec:connection_with_literature} where we discuss connections and differences between these works.

\subsection{Linear testbed}
\label{sec:linear_testbed}

To showcase the capability of the proposed method, we here consider a simple numerical example.  
Specifically, we consider a random autonomous and stable system of size $n=20$. 
Let then $A$ be a random uniform matrix in $\Cb^{n\times n}$, i.e. $[A]_{i,j}\sim\Uc([0,1])$ and let $B = A - \lambda_{\rm Max} \one_{n}$ where $\lambda_{\rm Max}$ is the absolute value of largest eigenvalue of $A$. 
The dynamic's generator $L\in\Cb^{n\times n}$ is then obtained as $L = \frac{1}{2}\frac{B}{\norm{B}_{op}}$ where $\norm{\cdot}_{op}$ denotes the operator norm.
This ensures that $L$ is simply stable and has operator norm $\norm{L}_{op}=\um$. 

The projector $P$ is then constructed through $R = \left[\begin{array}{c|c} \one_{m\times m} & 0_{m\times n-m} \end{array}\right]$ and $J = R^\dag$ to obtain \[P = \left[\begin{array}{c|c} \one_{m\times m} & 0_{m\times n-m} \\\hline 0_{m \times n-m } & 0_{n-m\times n-m} \end{array}\right].\]

By using Eq. \eqref{eqn:F_k} we can then compute the terms $F_{(k)}$ iteratively for $k=1,\dots,N$. Fig. \ref{fig:norm_bound} (left) shows both the Hilbert-Schmidt and operator norm of the terms $F_{(k)}$ for $k=1,20$. Interestingly, the Hilbert-Schmidt norm, $\norm{F_{(k)}}_{HS}$, is bounded by $\frac{\norm{L}_{HS}^k}{k!}$, indicating that the series $F_{t,N}$ should be converging for $N\to\infty$. Curiously, the same does not hold for the operator norm.
Fig. \ref{fig:norm_bound} (right) instead shows the norm of the terms $E_{(k)}$ where we fixed $N=20$ and $F_{(k)}=0$ $\forall k>20$. Interestingly, for $k\leq N$ $\norm{E_{(k)}}$ is bounded by $\frac{\norm{L}^k}{k!}$ while for $k>N$ this is no longer the case.  

\begin{figure}[th]
    \centering
    \includegraphics[width=\linewidth]{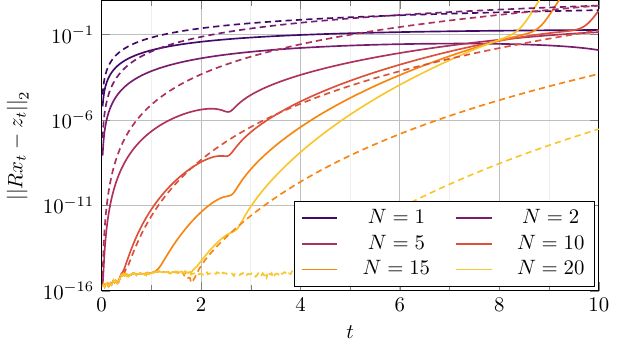}
    \caption{The continuous lines represent the error norm between the exact reduced evolution $R x_t$ and the approximated reduced evolution $z_t$ for different values of the approximation order $N$. The dashed curves represent an approximation obtained by truncating the Taylor expansion of the exponential for different values of the truncation order $N$. }
    \label{fig:error_linear}
\end{figure}

Fig. \ref{fig:error_linear} shows the error norm $\norm{Rx_t-z_t}_2$ versus time. In this simulation $Rx_t$ was simulated by reducing the original model, i.e. $Rx_t= R e^{L t}x_0$ while $z_t$ is obtained by simulating the time-dependent linear differential equation $\dot{z}_t=F_{t,N}z_t$, for different values of $N$, by using an approximation of the equation given in Theorem \ref{thm:time_ordered_expansion}, i.e. \(\Tb e^{\int_0^t F_{s,N} ds} \approx \sum_{k=0}^{100} t^k E_{(k)}\).  
It can be seen that increasing the degree of approximation $N$ decreases the error committed by the approximated model. 

An alternative natural numerical approximation that comes to mind is given by truncating the exponential expansion at a given order $N$, i.e. $Rx_t\approx \sum_{k=0}^{N} \frac{t^k}{k!} RL^k J\, Rx_0$. Note that such an approximation has numerical complexity almost identical to that of the approximation proposed in this work, as in both cases one needs to compute $L^N$. Furthermore, from a theoretical point of view, both cases commit an error of the order of $O(t^{N+1})$. To test this fact numerically, in Fig. \ref{fig:error_linear} we included the actual error committed by these two approximations at the same orders of approximation: the continuous line denotes the error committed by the procedure proposed here, while the dashed line is the error committed by the truncation of the exponential. One can observe that for lower orders ($N=1,2,5$) the procedure proposed here yields better results, whereas at higher orders this is not the case. 

\section{Ensuring valid quantum models at lower orders}
\label{sec:cptp_second_order}

In this section, we apply the results derived in the Sec. \ref{sec:time_ordered} to quantum dynamical models, the application that originally inspired this work. 

\subsection{The reduction problem for quantum dynamical models}
\label{sec:quantum_connection}
Let us consider a quantum system defined on a Hilbert space $\Hc$, and let $\Bf(\Hc)$ denote the set of bounded operators acting on $\Hc$. Unless differently specified, We only consider finite-dimensional systems, i.e. $\dim(\Hc)<\infty$, $\Hc\simeq\Cb^n$ and $\Bf(\Hc)\simeq \Cb^{n\times n}.$
The state of such a system is described by a density operator $\rho\in\Df(\Hc)$ with $\Df(\Hc) = \{\rho\in\Bf(\Hc)|\, \rho=\rho^\dag\geq0,\,\tr\rho=1\}$. Here, we assume that the state of the system evolves according to the Von-Neumann master equation 
\begin{equation}
    \dot{\rho}_t = \Lc(\rho_t)
    \label{eq:original_model}
\end{equation}
where $\rho_0\in\Df(\Hc)$ is the system's initial condition, and $\Lc$ is the celebrated Lindblad (or GKLS) generator \cite{lindblad1976generators,Gorini:1975nb}: 
\begin{equation}
    \Lc(\rho) = -i[H,\rho] + \sum_k\Dc_{L_k}(\rho)
    \label{eqn:lindblad_generator}
\end{equation}
$H=H^\dag$ is the system's Hamiltonian and $\Dc_L(\rho) = L\rho L^\dag - \um\{L^\dag L, \rho\}$ represents the system's dissipation toward a Markovian environment represented by a set of operators $\{L_k\}$ called noise operators. The evolution of the quantum system is then described by the exponential map $\rho_t = e^{\Lc t}(\rho_0)$. Importantly, $e^{\Lc t}$ generates a quantum dynamical semigroup, i.e. $\{\Lambda_t \equiv e^{\Lc t}\}_{t\geq 0}$ a one-parameter semigroup of CPTP maps also known as Markovian dynamics.
In the following, we will say that a super-operator is of \textit{Lindblad-type} if it can be put in the form of Eq. \eqref{eqn:lindblad_generator}.

Note that in general, in the derivation of TCL ME \cite{shibata_generalized_1977,chaturvedi_time-convolutionless_1979}, the initial ‘‘big'' model is considered to be Hamiltonian, i.e. $\Lc(\cdot)=-i[H,\cdot]$. However, mathematically, there is no reason to restrict to this case; hence, in this work, we start with the general assumption that $\Lc$ is a generator of Lindblad form and then focus on the Hamiltonian case when necessary.

In many situations of interest in quantum physics, relevant projectors are represented by CPTP ones. We thus assume to be given a CPTP, not necessarily orthogonal (w.r.t. $\inner{\cdot}{\cdot}_{HS}$), projector $\Pc:\Bf(\Hc)\to\Bf(\Hc)$, $\Pc^2=\Pc$, whose image will be denoted by $\Vs \equiv {\rm Im}\Pc$. This provides a generalization to the superoperator projector setting considered by \cite{breuer2006,breuer2007}.  Furthermore, for the sake of simplicity, we assume that $\Pc$ admits\footnote{Note that this assumption is made for simplicity of presentation and can easily be lifted by restricting onto the support of ${\rm Im}\Pc$, see e.g. \cite{wolf2012quantum}.} a full rank fixed point, i.e. $\exists\rho\in\Df(\Hc)$, $\rho>0$ such that $\Pc(\rho)=\rho$. 
As we shall see in more detail later (in Sec.\ref{sec:cptp_first_order}) such a projector is, in fact, a conditional expectation in quantum probability theory \cite{petz2007quantum,blackadar2006operator} and enjoys some useful and well known properties \cite{lindblad1999general, wolf2012quantum, ticozzi2017alternating, bialonczyk2018application, blume2010information,grigoletto2024exact,wolf2010inverseeigenvalueproblemquantum}. However, for now it is sufficient to know that $\Pc$ can be decomposed into two factors $\Pc = \Jc\Rc$ which take the name of reduction and injection maps $\Rc:\Bf(\Hc)\to\Bf(\Kc)$ and $\Jc:\Bf(\Kc)\to\Bf(\Hc)$ such that $\Rc\Jc = \Ic_\Kc$ the identity superoperator, ${\rm Im}\Jc = \Vs$ ${\rm Im}\Rc=\As$ and where $\As\subseteq\Bf(\Kc)$ is a $*$-subalgebra of $\Bf(\Kc)$ for some Hilbert space $\Kc$, $\dim(\Kc)\leq\dim(\Hc)$.

To provide some intuition about these mathematical objects, we anticipate the reduction and injection maps we are going to use in the applications of Sec.\,\ref{sec:applications}:
\begin{itemize}
    \item Assume a bipartite case, i.e. $\Hc=\Hc_S\otimes\Hc_B$. We can then consider $\Pc(\rho) = \tr_B(\rho)\otimes\tau$ for some $\tau\in\Df(\Hc_B)$, $\tau>0$ to be our CPTP projector. Then, by chosing $\Kc=\Hc_S$ and $\As=\Bf(\Hc_S)$ we have the maps $\Rc(\rho) = \tr_B(\rho)$ and $\Jc(\mu) = \mu\otimes\tau$.
    \item Another CPTP projector we might want to consider is the projector onto the diagonal $\Pc(\rho)= \sum_{j=0}^{n-1} \ketbra{j}{j} \rho \ketbra{j}{j}$, in which case $\Kc=\Hc$, $\As$ is the commutative algebra given by $\As\equiv\Span\{\ketbra{j}{j},\, j=0,\dots,n-1\}\simeq\Cb^n$. $\Rc(\rho) = \sum_{j=0}^{n-1} \ketbra{j}{j}\rho \ket{j}$ so that $\mu=\Rc(\rho)\in\Cb^n$, and $\Jc(\mu)=\sum_{j=0}^{n-1} \ketbra{j}{j} \bra{j}\mu$ factorize $\Pc$.
\end{itemize}

The reduction map $\Rc$ determines the reduced state, i.e. the one we want to keep track of. Let $\eta_t\in\Df(\Kc)$ be defined as $\eta_t = \Rc(\rho_t)$ for all $t\geq0$. Under the assumption $\Pc(\rho_0)=\rho_0$,
the evolution of the reduced state is thus given by 
\begin{align*}
    \eta_t = \Rc e^{\Lc t} (\rho_0) = \Rc e^{\Lc t}\Jc (\eta_0).
\end{align*}

Theorem \ref{thm:approx} then offers a method to approximate this evolution with a time-dependent generator $\Fc_{t,N}$ that is polynomial in time $t$, i.e. $\Fc_{t,N} = \sum_{k=0}^N t^k \Fc_{(k+1)}$ and such that   
\[\norm{\Tb e^{\int_0^t \Fc_{s,N} ds} - \Rc e^{\Lc t} \Jc} = O(t^{N+1}).\]

Desirable properties for $\Fc_{t,N}$ certainly include the fact that we would like the semigroup $\{\Upsilon_t \equiv \Tb e^{\int_0^t \Fc_s ds} \}_{t\geq0}$ to be a set of CPTP maps for all $t\geq0$. Note that necessary conditions to impose on $\Fc_t$ so that $\Tb e^{\int_0^t \Fc_s ds}$ is CPTP for all $t\geq0$ are not known \cite{benatti2024quantum}. Nevertheless, in the following subsections, we shall see that, under mild assumptions, the first two orders of the approximation ensure that this desirable property is preserved.

\subsection{Connection with the physics literature}
\label{sec:connection_with_literature}

In the typical setting of TCL ME derivation, the system is considered to be bipartite, i.e. $\Hc=\Hc_S\otimes\Hc_B$ and the dynamics purely Hamiltonian, i.e. $\Lc(\rho)=-i[H,\rho]$ with Hamiltonian $H = H_0 + \varepsilon V$ where $H_0$ is the free Hamiltonian of the system and bath, $V$ is the interaction Hamiltonian and $\varepsilon$ is the coupling strength.
Under these assumptions, and assuming that the coupling strength is small, the time-dependent generator $\Kc_t$ defined in Eq. \eqref{eq:tcl_me_no_affine} is computed by performing a perturbative expansion in the coupling parameter. The time-dependent generator is then expressed as 
\begin{align*}
    \Kc_t = \sum_{n=1}^\infty \Kc_n(t)
\end{align*}
where
\begin{align*}
    \Kc_n(t) &= \int_{0}^t dt_1 \int_0^{t_1} dt_2 \dots \int_0^{t_{n-2}}dt_{n-1}\\
    &\qquad \times \sum_q (-1)^q \Pc \Lc_{t} \dots \Lc_{t_i}\Pc\Lc_{t_j}\dots\Lc_{t_k}\Pc\dots\Pc
\end{align*}
and where $\Lc_t = -i[H(t),t]$ and $H(t)$ is the Hamiltonian in rotating frame with respect to $H_0$. The series is then approximated by stopping at a given order $N$, i.e. $\Kc_t\approx\sum_{n=1}^N\Kc_n(t)$.

Under these assumptions, a recursive formula similar to Eq. \eqref{eqn:F_k} has been derived.  
Specifically \cite[Eq. (38)]{nestmann2019time} (but also \cite[Eq. (2)]{cerrillo2014non} \cite[Eq. (2)]{cygorek2025timenonlocalversustimelocallongtime}, \cite[Eq. (16)]{gasbarri2018recursive}) provides a recursive formula for the terms $\Kc_n(t)$ of the time-dependent generator $\Kc_t$. However, these recursive formulas were derived in the rotating frame (w.r.t. $H_0$) under a weak coupling assumption and thus from an initial time-dependent model. This implies that the terms of the recursive formula contain time-ordered integrals that need to be solved to compute the terms of the expansion. 

In contrast, in this work, we start from a time-independent model, hence Eq. \eqref{eqn:F_k} (which was derived independently from \cite{nestmann2019time,cerrillo2014non,cygorek2025timenonlocalversustimelocallongtime,gasbarri2018recursive}) does not contain integrals, resulting in simpler computations.

Furthermore, the main difference that separates this work from most of the physics literature, see e.g. \cite{colla2025unveiling,colla2025recursive,Tokieda_2025} is the fact that, instead of expanding the time-dependent generator in a power series of a perturbation parameter $\varepsilon$, we expand in the time variable $t$. This difference allows for a few benefits: It simplifies the calculations of the approximations and lifts the weak-coupling assumption, thus allowing one to explore the effects of strong-coupling regimes. 

The main drawback introduced by the expansion in the time variable $t$ is the fact that the approximate model holds only for small times. Nevertheless, as we highlighted above, while this consequence is clearly limiting, this situation is comparable to the typical derivation of TCL ME since no guarantees on the long-time limit are available. In contrast, in this work we were able to characterize how the error between the original and reduced models grows with $t$.  

\subsection{Complete positivity of the first order}
\label{sec:cptp_first_order}

In this section, we recall relevant known results about CPTP projectors, their images, and reduced dynamics onto distorted algebras from the literature \cite{lindblad1999general, wolf2012quantum, ticozzi2017alternating, bialonczyk2018application, blume2010information,grigoletto2024exact,wolf2010inverseeigenvalueproblemquantum}. The consequences of the assumptions we made about the projector $\Pc$ are summarized in the following proposition. 
\begin{proposition}
\label{prop:structure_of_fixed_points}
    Let $\Pc$ be a CPTP projector with a full rank fixed point $\bar{\rho}>0$, $\Pc(\bar{\rho})=\bar{\rho}$ and let $\Vs$ be its image, i.e. $ \Vs \equiv {\rm Im}\Pc \subseteq\Bf(\Hc)$. 
    Then there exist a decomposition of the Hilbert space \[\Hc = \bigoplus_k (\Hc_{F,k}\otimes\Hc_{G,k}),\] a unitary operator $U\in\Bf(\Hc)$, and a set of density operators $\tau_k\in\Df(\Hc_{G,k})$ such that 
    \begin{equation}
        \Vs = U\left(\bigoplus_k \Bf(\Hc_{F,k}) \otimes \tau_k  \right) U^\dag.
        \label{eq:wedderburn_decomposition}
    \end{equation}
    Furthermore, $\Vs$ is a $*$-algebra closed with respect to the product $X\cdot_\sigma Y \equiv X \sigma Y$ where \[\sigma = U\left(\bigoplus_k\one_{F,k}\otimes \tau_k^{-1}\right)U^\dag.\]  
    Let then $W_k$ be a linear operator from $\Hc$ to $\Hc_{F,k}\otimes\Hc_{G,k}$ such that $W_k W_k^\dag=\one_{F,k}\otimes \one_{G,k}$ \cite{wolf2012quantum}. Then the CPTP projection $\Pc$ onto $\Vs$ takes the form
    \begin{align}
         \Pc(X) 
            &= U\left[ \bigoplus_{k} \tr_{\Hc_{G,k}}\left(W_k X W_k^\dag \right)\otimes \tau_k  \right] U^\dag.
        \label{eqn:state_ext_blocks}
    \end{align}
    and is orthogonal with respect to the inner product $\inner{X}{Y}_{\sigma,\lambda} = \tr[X^\dag \sigma^\lambda Y \sigma^{1-\lambda}]$ for $\lambda\in[0,1]$.
\end{proposition}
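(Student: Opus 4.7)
The plan is to assemble the proposition from well-established structural results about CPTP projectors with a faithful invariant state, essentially repackaging the Koashi--Imoto / Blume-Kohout / Lindblad fixed-point theorem together with the Wedderburn decomposition for finite-dimensional $*$-algebras and the Takesaki-type characterization of (generalized) conditional expectations. Most ingredients are already proved in the cited works \cite{lindblad1999general,wolf2012quantum,ticozzi2017alternating,blume2010information,grigoletto2024exact}; the task is to glue them into the single statement above.

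\textbf{Step 1: fixed points of $\Pc$ and the Wedderburn form of $\Vs$.} Since $\Pc$ is a CPTP projector, $\Vs=\img\Pc$ coincides with the fixed-point set $\{X\in\Bf(\Hc):\Pc(X)=X\}$. The standing assumption that $\Pc$ admits a full-rank fixed point $\bar\rho>0$ places us in the setting of the Koashi--Imoto / Lindblad theorem on fixed-point sets of unital (or faithful) CP maps: in finite dimension such a fixed-point set is, after a unitary change of basis $U$, a direct sum of ``distorted'' matrix factors of the form $\Bf(\Hc_{F,k})\otimes\tau_k$ with $\tau_k\in\Df(\Hc_{G,k})$, $\tau_k>0$. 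This gives Eq.~\eqref{eq:wedderburn_decomposition}; the $\tau_k$ are identified from the block decomposition of $\bar\rho$ itself via the isometries $W_k$.

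\textbf{Step 2: the $\sigma$-distorted $*$-algebra structure.} With the modular element $\sigma=U(\bigoplus_k\one_{F,k}\otimes\tau_k^{-1})U^\dag$ I would verify directly in block form that, for $X=U(\bigoplus_k X_k\otimes\tau_k)U^\dag$ and $Y=U(\bigoplus_k Y_k\otimes\tau_k)U^\dag$ in $\Vs$, the product $X\sigma Y$ computes to $U(\bigoplus_k X_k Y_k\otimes\tau_k)U^\dag$, which lies in $\Vs$. Closure under $X\mapsto X^\dag$ is immediate from the form~\eqref{eq:wedderburn_decomposition}, establishing that $(\Vs,\cdot_\sigma,\dag)$ is a $*$-algebra. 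This step is completely computational and should present no real difficulty.

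\textbf{Step 3: the explicit form of $\Pc$ and its orthogonality.} The key point is uniqueness: once $\Vs$ is fixed and $\bar\rho$ is a full-rank fixed point, there is a unique CPTP projector onto $\Vs$ fixing $\bar\rho$, namely the generalized conditional expectation of Accardi--Cecchini / Petz, which in the block decomposition reads exactly Eq.~\eqref{eqn:state_ext_blocks}. I would establish this by expanding $X$ in the block basis provided by the $W_k$, using that any $\Pc(X)\in\Vs$ must be of the form $U(\bigoplus_k A_k(X)\otimes\tau_k)U^\dag$, and then identifying $A_k$ via trace preservation and complete positivity (the partial trace against $W_k X W_k^\dag$ is forced). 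Orthogonality with respect to $\inner{X}{Y}_{\sigma,\lambda}=\tr[X^\dag\sigma^\lambda Y\sigma^{1-\lambda}]$ for each $\lambda\in[0,1]$ then follows from Takesaki's theorem on existence of a conditional expectation compatible with the modular automorphism group, applied to $\sigma$ (or, more pedestrianly, by directly checking $\inner{X-\Pc(X)}{Y}_{\sigma,\lambda}=0$ for $Y\in\Vs$ using the explicit block form of $\Pc$ from Step~2).

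\textbf{Main obstacle.} The genuinely nontrivial ingredient is Step~1: the claim that any CPTP projector with a faithful fixed point has an image of the precise Wedderburn shape~\eqref{eq:wedderburn_decomposition}. I would not reprove this from scratch but cite \cite{blume2010information,wolf2012quantum,ticozzi2017alternating} where the decomposition is worked out in detail. Steps~2 and~3 are then essentially bookkeeping in the chosen block basis, and the only subtlety is making sure that the modular inner product used in the orthogonality statement is the one with respect to which the generalized conditional expectation is self-adjoint for all $\lambda\in[0,1]$, which is again standard (Petz recovery / modular theory).
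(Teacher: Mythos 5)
Your proposal matches the paper's treatment: the paper does not prove this proposition either, but defers entirely to the cited literature (Lindblad, Wolf, Ticozzi--Viola, Blume-Kohout et al.) for the fixed-point/Wedderburn structure, the distorted-algebra property, and the explicit block form of the state-extension projector with its $\lambda$-weighted orthogonality. Your Steps 2 and 3 add reasonable computational glue, but the overall route --- cite the structure theorem for the image, then verify the algebra and projector form in the block basis --- is essentially the same.
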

Proof of this result can be found in \cite{lindblad1999general} and \cite{wolf2010inverseeigenvalueproblemquantum,wolf2012quantum,tit2023, ticozzi2017alternating, johnson2015general}. Note that projectors of this type take the name of state extensions and are the dual of conditional expectations \cite{petz2007quantum,blackadar2006operator}. Furthermore, note that a $*$-algebra closed with respect to a modified product often takes the name of \textit{distorted algebra} \cite{blume2010information}, while the block structure provided in Eq. \eqref{eq:wedderburn_decomposition} takes the name of \textit{Wedderburn decomposition}. 

The previous result provides us with a structure for the image of a  CPTP projection whenever there exists a full rank state $\bar{\rho}$ such that $\Pc(\bar{\rho}) = \bar{\rho}$. From this structure one can observe that a distorted algebra $\Vs$ is isomorphic to a smaller representation which we denote $\As$ where the weighted copies are removed, i.e. $\As \equiv \bigoplus_k \Bf(\Hc_{F,k})$. This allows us to find the two factors $\Rc:\Bf(\Hc)\to{\As}$ and $\Jc:{\As}\to\Bf(\Hc)$ such that, $\Rc\Jc = \Ic_{{\As}}$ and $\Pc = \Jc\Rc$. 
Furthermore, one can prove \cite{grigoletto2024exact} that these two factors can be chosen to be CPTP, i.e. for all $X\in\Bf(\Hc)$
\begin{equation}
\label{eqn:reduction}
\Rc(X)=\bigoplus_k \tr_{\Hc_{G,k}}(W_k X W_k^\dag)=\bigoplus_k X_{F,k}=\check X ,
\end{equation}
and for all $\check{X}\in\check{\As}$, $\check{X}=\bigoplus_k X_{F,k}$,
\begin{equation}
\label{eqn:injection}
\es(\check X)=U\bigg(\bigoplus_k X_{F,k} \otimes\tau_{k}\bigg)U^\dag.
\end{equation}

The reduction and injection maps given in equations \eqref{eqn:reduction} and \eqref{eqn:injection} then allow us to compute the reduced dynamics $ \Rc\Lc\Jc$ restricted to the algebra ${\As}$. Then we can prove that the reduced generator $\check{\Lc}$ is GKLS by leveraging the structure of the reduction and injection maps. Here we recall \cite[Theorem 4]{grigoletto2024exact} for completeness.
\begin{theorem}
\label{thm:Lindblad_reduction}
Let $\Pc$ be a CPTP projection with a full rank fixed point, let $\mathscr{V}$ be its image $\Vs = {\rm Im }\Pc \subseteq \mathcal{B(H)}$, and let $\Rc$ and $\Jc$ denote the CPTP factorization of $\Pc = \Jc\Rc$, as defined in Eq.\,\eqref{eqn:reduction} and \eqref{eqn:injection}. Then for any Lindblad generator $\Lc$, its restriction to ${\As}$, 
$\Rc\Lc\Jc,$ 
is also a Lindblad generator, that is, $\Rc\Lc\Jc:\As\to\As$ and $\{e^{\check{\Lc}t}\}_{t\geq 0}$ is a quantum dynamical semigroup. 
\end{theorem}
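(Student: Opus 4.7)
The goal is to establish that $\check{\Lc} \equiv \Rc\Lc\Jc$ is of Lindblad type; once this is in hand, the Gorini-Kossakowski-Sudarshan-Lindblad theorem yields that $\{e^{\check{\Lc} t}\}_{t\geq 0}$ is a quantum dynamical semigroup. Observe first that $\check{\Lc}$ is automatically well-defined as a map $\As \to \As$, since the codomain of $\Rc$ is $\As$ by construction; hence the real content of the theorem is the algebraic form of $\check{\Lc}$.

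The plan is a direct computation in the Wedderburn coordinates supplied by Proposition \ref{prop:structure_of_fixed_points}. By absorbing the unitary $U$ into $H$ and each noise operator, one may assume $\Jc(\check X) = \bigoplus_k X_{F,k}\otimes\tau_k$ and $\Rc(\rho) = \bigoplus_k \tr_{\Hc_{G,k}}(W_k \rho W_k^\dag)$. Fix an eigendecomposition $\tau_k = \sum_j p_j^{(k)}\ketbra{j^{(k)}}{j^{(k)}}$, with $p_j^{(k)} > 0$ thanks to the full-rank hypothesis on the fixed point of $\Pc$. Decomposing $\Lc = -i[H,\cdot] + \sum_\alpha \Dc_{L_\alpha}$ and expanding each of $H$ and $L_\alpha$ block-wise against this basis, one computes the action of $\Rc(\cdot)\Jc$ on each piece.

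For the Hamiltonian part, the partial trace against $\tau_k$ forces only the $(k,k)$-diagonal blocks of $H$ (paired with their diagonal $G$-basis entries) to contribute, yielding $\Rc(-i[H,\Jc(\check X)]) = -i[\check H, \check X]$ with $\check H = \bigoplus_k \sum_j p_j^{(k)} (H_{k,k})_{j,j}$ Hermitian in $\As$. For each dissipator, expanding $L_\alpha = \sum_{k,k',i,j}(L_\alpha)_{k,k';i,j}\otimes\ket{i^{(k)}}\bra{j^{(k')}}$ and pushing the partial trace through produces $\Rc\Dc_{L_\alpha}\Jc(\check X) = \sum_{k,k',i,j}\Dc_{\check L_{\alpha,k,k',i,j}}(\check X)$, with effective noise operators $\check L_{\alpha,k,k',i,j} = \sqrt{p_j^{(k')}}\,(L_\alpha)_{k,k';i,j}$ viewed as elements of $\Bf(\bigoplus_k \Hc_{F,k}) \simeq \As$. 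Collecting all indices into a single multi-index $\mu$ gives $\check{\Lc}(\check X) = -i[\check H, \check X] + \sum_\mu \Dc_{\check L_\mu}(\check X)$, which is manifestly of GKLS form.

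The main obstacle will be the combinatorial bookkeeping of the multi-block Wedderburn decomposition. The off-diagonal blocks $L_{k,k'}$ with $k \neq k'$ of the noise operators must be tracked carefully, since they give rise to effective Kraus operators mapping between different $F$-blocks of $\As$; in parallel, the calculation for the anticommutator $\{L^\dag L, \Jc(\check X)\}$ has to be checked to confirm that only the block-index and $G$-basis pairings that match the partial-trace structure survive, so that the result indeed rearranges into a sum of Lindblad dissipators on $\As$ rather than leaving spurious residual terms outside the dissipator template. Once this algebraic bookkeeping is organized, the Lindblad form of $\check{\Lc}$ follows and the theorem is proven.
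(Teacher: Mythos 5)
The paper does not actually prove this statement: it is imported verbatim as \cite[Theorem 4]{grigoletto2024exact}, so there is no in-paper argument to match yours against. Your direct computation in the Wedderburn coordinates of Proposition \ref{prop:structure_of_fixed_points} is a legitimate, essentially self-contained alternative route, and its key claims check out: writing $\tau_{k'}=\sum_j p_j^{(k')}\ketbra{j}{j}$, the off-diagonal blocks $H_{k,k'}$ ($k\neq k'$) of the Hamiltonian land in off-diagonal blocks of $H\Jc(\check X)$ and are annihilated by $\Rc$, so the Hamiltonian part reduces exactly to $-i[\check H,\cdot]$ with $\check H=\bigoplus_k\tr_{\Hc_{G,k}}[(\one\otimes\tau_k)H_{k,k}]$, and for the dissipators the identity $\bra{j'}\tau_{k'}\ket{j}=p_j^{(k')}\delta_{j,j'}$ makes the \emph{same} effective operators $\check L_{k,k';i,j}=\sqrt{p_j^{(k')}}\,(L)_{k,k';i,j}$ appear in both the sandwich term and the anticommutator term, which is precisely why no residual non-GKLS terms survive — this is the crux you correctly flag as needing verification. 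Two remarks. First, as written your text is a plan rather than a proof: the multi-block bookkeeping (in particular the $k\neq k'$ blocks and the anticommutator matching) is deferred, and it must be carried out explicitly for the argument to be complete; it does go through. Second, the trade-off versus the cited proof is worth noting: your constructive route has the added benefit of producing the reduced Hamiltonian and noise operators explicitly (something the paper otherwise defers to the appendix of \cite{grigoletto2025quantummodelreductioncontinuoustime}), at the cost of heavier index gymnastics than an abstract argument based on, e.g., conditional complete positivity of $\Rc\Lc\Jc$ or a Trotter-type limit of the CPTP maps $(\Rc e^{\Lc t/n}\Jc)^n$.
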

Note that this theorem only proves that $\Rc\Lc\Jc$ is of Lindblad type. In order to compute the reduced Hamiltonian and noise operators, one can use the procedure described in \cite[Appendix]{grigoletto2025quantummodelreductioncontinuoustime}.
This result allows us to state that, under the assumptions described above, the approximation at first order $\Fc_{t,1} = \Rc\Lc\Jc$ is of Lindblad type and thus it is guaranteed to give a CPTP semigroup for all times.    

\subsection{Second order: The case of purely Hamiltonian evolution}

While the results we derived up to this point hold for general Lindblad generators $\Lc$, we shall now focus on the special case of purely Hamiltonian evolution, i.e. $\Lc(\cdot) = -i[H,\cdot].$ Other than the fact that this assumption is interesting in practical cases, we shall see that under this simplifying assumption we are able to prove that the second order term $\Fc_{(2)}$ is also of Lindblad-type.

The next result is based on the fact that the squared of a purely Hamiltonian generator is equal to a dissipator:
\begin{align}
    \Lc^2(\rho) &= - [H[H,\rho]] = -(H^2\rho - 2H\rho H + \rho H^2)\nonumber \\
    &= 2H\rho H - \{H^2,\rho\} = \Dc_{\sqrt{2}H}(\rho).
    \label{eqn:square_hamiltonian}
\end{align}
This fact allows us to prove the next result.
\begin{theorem}
\label{thm:cptp_second_order}
    Under the assumptions of Theorem \ref{thm:Lindblad_reduction} we have that $\Fc_{(2)} = \Rc\Lc^2\Jc - (\Rc \Lc \Jc)^2$ is of Lindblad-type and for $\Fc_{t,2} \equiv \Fc_{(1)} + t \Fc_{(2)}$ we have that $\Tb e^{\int_0^t \Fc_{s,2} ds}$ forms a CPTP semigroup for all $t\geq0$. 
\end{theorem}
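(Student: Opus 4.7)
The plan is to reduce Theorem \ref{thm:cptp_second_order} to the single claim that $\Fc_{(2)}$ is itself a Lindblad-type generator, \emph{with a positive semidefinite Kossakowski matrix}. Once this is shown, note that $\Fc_{(1)} = \Rc\Lc\Jc$ is also of Lindblad type by Theorem \ref{thm:Lindblad_reduction}, so for every $t\geq 0$
\[
\Fc_{t,2} = \Fc_{(1)} + t\,\Fc_{(2)}
\]
is a conic combination (with nonnegative weights $1$ and $t$) of Lindblad generators and therefore again of Lindblad type. The map $\Tb e^{\int_0^t \Fc_{s,2}\,ds}$ is then obtained as a time-ordered exponential of a pointwise-Lindblad family, and the standard CP-divisibility argument (obtain it as a limit of products $\prod_i e^{\Fc_{s_i,2}\Delta s}$, each factor CPTP) yields the desired CPTP two-parameter semigroup.

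The reduction step starts from the observation
\[
(\Rc\Lc\Jc)^2 = \Rc\Lc\Jc\Rc\Lc\Jc = \Rc\Lc\Pc\Lc\Jc,
\]
so that $\Fc_{(2)} = \Rc\Lc(\Ic-\Pc)\Lc\Jc = \Rc\Lc\Qc\Lc\Jc$. Using $\Lc = -i[H,\cdot]$ and the key identity $\Lc^2 = \Dc_{\sqrt 2 H}$ from Eq. \eqref{eqn:square_hamiltonian}, the first term $\Rc\Lc^2\Jc = \Rc\Dc_{\sqrt 2 H}\Jc$ is of Lindblad type by Theorem \ref{thm:Lindblad_reduction}. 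The goal is then to show that subtracting $\Rc\Lc\Pc\Lc\Jc$ preserves the Lindblad property and, crucially, positivity of the associated Kossakowski matrix.

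To carry this out, I would put $\Pc$ in the Wedderburn form supplied by Proposition \ref{prop:structure_of_fixed_points}, so that (absorbing the unitary $U$) $\Jc(\check\rho) = \bigoplus_k \check\rho_k\otimes\tau_k$ and $\Rc$ implements the partial traces $\tr_{G,k}$. Expanding the Hamiltonian within this structured basis as $H = \sum_\alpha A_\alpha\otimes B_\alpha$ and explicitly computing both $\Rc\Lc^2\Jc$ and $(\Rc\Lc\Jc)^2$, the ``mean-field'' contributions arising from $\Pc$ collect into the square of the effective commutator, namely $(\Rc\Lc\Jc)(\check\rho) = -i[\tilde H,\check\rho]$ with $\tilde H = \sum_\alpha A_\alpha \tr(B_\alpha\tau)$. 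The difference then simplifies to the canonical dissipator form
\[
\Fc_{(2)}(\check\rho) = \sum_{\alpha,\beta}K_{\alpha\beta}\Bigl(A_\alpha\check\rho A_\beta - \tfrac{1}{2}\{A_\beta A_\alpha,\check\rho\}\Bigr),
\]
with the Kossakowski matrix
\[
K_{\alpha\beta} = 2\bigl(\tr(B_\beta B_\alpha\tau) - \tr(B_\beta\tau)\tr(B_\alpha\tau)\bigr),
\]
i.e.\ twice the covariance matrix of the operators $B_\alpha$ with respect to the reference state $\tau$. Positive semidefiniteness is then immediate: for any vector $v$, setting $B_v = \sum_\alpha v_\alpha B_\alpha$ gives $v^\dag K v = 2(\tr(B_v^\dag B_v\tau) - |\tr(B_v\tau)|^2) = 2\,\mathrm{Var}_\tau(B_v)\geq 0$.

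The main technical obstacle I expect is carrying out the block computation in the full Wedderburn setting rather than a bipartite special case: the Hamiltonian $H$ has non-trivial off-diagonal blocks between distinct $\Hc_{F,k}\otimes\Hc_{G,k}$ summands, and $\Pc$ acts on these in a nonobvious way through the operators $W_k$. The clean covariance interpretation only emerges once these off-diagonal contributions are carefully collected in the difference $\Rc\Lc^2\Jc - (\Rc\Lc\Jc)^2$, using the projector identity $\Pc\Jc = \Jc$ and $\Rc\Pc = \Rc$. Once the Kossakowski matrix positivity is in hand, the CPTP property of the two-parameter semigroup generated by $\Fc_{t,2}$ is a direct consequence of the conic-combination argument in the first paragraph.
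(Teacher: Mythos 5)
Your overall strategy is the right one and, in the bipartite special case, it coincides with the derivation the paper itself gives in Sec.~\ref{sec:second_order_bipartite}: write $(\Rc\Lc\Jc)(\check\rho)=-i[\check H,\check\rho]$, use Eq.~\eqref{eqn:square_hamiltonian} to turn both $\Rc\Lc^2\Jc$ and $(\Rc\Lc\Jc)^2$ into dissipator form, and recognize the resulting Kossakowski matrix as (twice) the covariance matrix of the bath operators in the reference state, hence positive semidefinite. Your closing argument --- $\Fc_{t,2}$ is a conic combination of Lindblad generators for each $t\ge 0$, so the time-ordered exponential is a limit of products of CPTP maps and therefore CPTP --- is also correct and is exactly what the paper relies on implicitly.

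The genuine gap is the step you yourself flag as ``the main technical obstacle'': the theorem is stated under the assumptions of Theorem~\ref{thm:Lindblad_reduction}, i.e.\ for an arbitrary CPTP projection with a full-rank fixed point, whose image has the general Wedderburn form $\bigoplus_k \Bf(\Hc_{F,k})\otimes\tau_k$ of Proposition~\ref{prop:structure_of_fixed_points}. In that setting there is no global tensor-product expansion $H=\sum_\alpha A_\alpha\otimes B_\alpha$: the bipartition, and hence the reference state $\tau$, differs from block to block, and $H$ has off-diagonal blocks coupling distinct summands $\Hc_{F,k}\otimes\Hc_{G,k}$. Your covariance formula for $K_{\alpha\beta}$ is therefore not defined in general, and it is precisely the organization of these cross-block contributions into a nonnegative combination of dissipators that constitutes the content of the theorem. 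The paper closes this by importing structural results from a companion work: $\Rc\,\Dc_{\sqrt2 H}\,\Jc$ decomposes as a sum of dissipators $\sum_{\check C\in\Xc(\sqrt2 H)}\Dc_{\check C}$, and --- crucially --- the reduced Hamiltonian $\sqrt2\,\Jc^\dag(H)$ is always one of the operators in $\Xc(\sqrt2 H)$, so that subtracting $(\Rc\Lc\Jc)^2=\Dc_{\sqrt2\check H}$ removes exactly one term and leaves a manifestly Lindblad-type sum. Without either reproving those two facts or restricting the statement to the bipartite case, your argument establishes the theorem only for projectors of the form $\tr_B(\cdot)\otimes\tau$. (Your identity $\Fc_{(2)}=\Rc\Lc\Qc\Lc\Jc$ is correct but does not by itself yield positivity, since $\Qc$ is not a positive map.)
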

\begin{proof}
    By leveraging \cite[Proposition 5]{grigoletto2024exact} we have that $\Rc\Lc\Jc(\check{\rho}) = -i[\check{H},\check{\rho}]$ where $\check{H} = \Jc^\dag(H)$ hence, using equation \eqref{eqn:square_hamiltonian} we obtain $(\Rc\Lc\Jc)^2(\check{\rho}) = \Dc_{\sqrt{2}\check{H}}(\check{\rho})$. On the other hand, by leveraging \cite[Proposition 6 and Proposition 7]{grigoletto2025quantummodelreductioncontinuoustime}, we have that $\Rc \Dc_{\sqrt{2}H}\Jc(\check{\rho}) = \sum_{\check{C}\in\Xc(\sqrt{2}H)} \Dc_{\check{C}}(\check{\rho})$ for a set of operators $\Xc(\sqrt{2}H)\subset \Bf(\Kc)$. Importantly, by \cite[Corollary 4]{grigoletto2025quantummodelreductioncontinuoustime}, we have that $\Jc^\dag(H)\in\Xc(\sqrt{2}H)$ and thus we can write $\Rc \Dc_{\sqrt{2}H}\Jc(\check{\rho}) = \sum_{\check{C}\in\Xc(\sqrt{2}H)/\{\sqrt{2}\check{H}\}} \Dc_{\check{C}}(\check{\rho}) + \Dc_{\sqrt{2}\check{H}}(\check{\rho})$ since $\sqrt{2}\check{H} = \Jc^\dag(\sqrt{2}H)$. Then, by combining these two results we obtain 
    \[ \Fc_{(2)} = \Rc\Lc^2\Jc - (\Rc \Lc \Jc)^2 = \sum_{\check{C}\in\Xc(\sqrt{2}H)/\{\sqrt{2}\check{H}\}} \Dc_{\check{C}}(\check{\rho}) \]
    which is the sum of a set of GKLS, purely dissipative generators and thus of Lindblad type. Combining this with Theorem \ref{thm:Lindblad_reduction} it trivially follows that $\Fc_{t,2}$ is of Lindblad type for all $t\geq0$.
\end{proof}
\subsection{Second order in the special case of bipartite systems}
\label{sec:second_order_bipartite}
Since the proof of Theorem \ref{thm:cptp_second_order} is based on a few very technical results, here we want to present an pedagogical derivation of the same result in the special case of bipartite quantum systems. Note that this derivation is quite relevant in practice as it focuses on the most studied case of a bipartite system, but the cases included in the assumptions of Theorem \ref{thm:cptp_second_order} are more general. This exact case will be the focus of the two applications of interest studied in Sec \ref{sec:spin_boson} and \ref{sec:central_spin}.

Let us assume that $\Hc = \Hc_S\otimes\Hc_E$ and let assume that the CPTP projector $\Pc$ is given by the $\Pc(\cdot) = \tr_E(\cdot)\otimes\tau$ for some full rank state $\tau\in\Df(\Hc_E)$. The projector $\Pc$ is then factorized into the reduction map $\Rc(\cdot) = \tr_E(\cdot)$ and the injection map $\Jc(\cdot) = \cdot\otimes \tau$. 
In line with the assumptions of this section we assume that $\Lc(\cdot) = -i \ad_H(\cdot)$. Given the bipartite structure of the Hilbert space $\Hc$ we can always write $H = \sum_k S_k \otimes E_k$ for two sets of Hermitian operators $\{S_k\}\subset\Bf(\Hc_S)$ and $\{E_k\}\subset\Bf(\Hc_E)$.

We can then proceed to compute the reduced generator approximated at second order $\Fc_{t,2}(\mu)$. First, using \cite[Proposition 5]{grigoletto2024exact}, we have that the generator at first order is \[\Fc_{(1)}(\mu) = \Rc \Lc \Jc(\mu) = -i[\check{H},\mu]\] where
\begin{align*}
    \check{H} &= \Jc^\dag(H) = \tr_{E}[(\one_S\otimes\tau) H]
    = \sum_k a_k(\tau) S_k
\end{align*}
and $a_k(\tau) \equiv \tr[\tau E_k]. $ Next, we can proceed to compute $\Fc_{(2)}$ by first computing $\Rc \Lc^2 \Jc$ obtaining 
\begin{align*}
    \Rc \Lc^2 \Jc(\mu) &= &= \sum_{j,k} c_{j,k}(\tau) \left[2S_j \mu S_k - \{ S_kS_j, \mu\}  \right]
\end{align*}
where we defined $c_{j,k}(\tau) \equiv \tr[ E_j\tau E_k]$.
On the other hand, since $(\Rc\Lc\Jc)^2 = \Dc_{\sqrt{2}\check{H}}$, we have
\begin{align*}
    (\Rc\Lc\Jc)^2(\mu) = \sum_{j,k}  a_j(\tau)a_k(\tau) \left[ 2 S_j \mu S_k - \{S_kS_j,\mu\} \right]
\end{align*}
and thus 
\begin{align*}
    \Fc_{(2)}(\mu)&=[\Rc \Lc^2 \Jc  - (\Rc\Lc\Jc)^2](\mu) =\\ &=\sum_{j,k} [\chi(\tau)]_{j,k} \left[2S_j \mu S_k -  \{ S_kS_j, \mu\}  \right]
\end{align*}
where we defined the matrix $\chi(\tau)$ whose elements are $[\chi(\tau)]_{j,k} \equiv c_{j,k}(\tau)-a_j(\tau)a_k(\tau)$.

One can then notice that the matrix $\chi(\tau)$ coincides with a covariance matrix, i.e. 
\begin{align*}
    [\chi(\tau)]_{j,k} &= \tr[E_j \tau E_k] - \tr[E_j\tau]\tr[E_k\tau]\\
    &= \Eb_\tau[(E_k - \Eb_\tau[E_k]\one_E) (E_j - \Eb_\tau[E_j]\one_E)]
\end{align*}
hence $\chi(\tau)$ is a positive semidefinite matrix.
Since $\chi(\tau)$ is positive semidefinite, it is diagonalizable with a unitary matrix, i.e. $UU^\dag =U^\dag U = \one$ such that $U\chi(\tau) U^\dag = \gamma(\tau)$ where $\gamma(\tau)$ is a diagonal matrix of positive values (note that in reality $U$ is also a function of $\tau$). We can then define 
\[L_h = \sum_j \overline{[U]_{h,j}} S_j \]
to obtain 
\begin{align*}
\Fc_{(2)}(\mu) 
&=\sum_{j,k} [\chi(\tau)]_{j,k} \left[2S_j \mu S_k - \{ S_kS_j, \mu\}  \right]\\
&=\sum_{j,k,h} \underbrace{[U^\dag]_{j,h}[\gamma(\tau)]_{h} [U]_{h,k}}_{[\chi(\tau)]_{j,k}} \left[2 S_j \mu S_k - \{ S_k S_j, \mu\}  \right]\\
&=\sum_{j,k,h} [\gamma(\tau)]_{h} \overline{[U]_{h,j}}[U]_{h,k} \left[2 S_j \mu S_k - \{ S_k S_j, \mu\}  \right]\\
&=\sum_h [\gamma(\tau)]_{h} \left[2L_h \mu L_h^\dag - \{ L_h^\dag L_h, \mu\}  \right] \\
& = \sum_h [\gamma(\tau)]_h \Dc_{\sqrt{2} L_h}(\mu).
\end{align*}
Because $\chi(\tau)$ is positive semidefinite and thus $[\gamma(\tau)]_h$ are non-negative, then $\Fc_{(2)}$ is of Lindblad type. This implies that the approximated time-dependent generator at second order 
\[\Fc_{t,2}(\mu) = -i[\check{H},\mu] + \sum_h t [\gamma(\tau)]_h \Dc_{\sqrt{2} L_h}(\mu)\]
is also of Lindblad type at all times $t\geq0$ as proved in Theorem \ref{thm:cptp_second_order}.

Note that in deriving the reduced generator $\Fc_{t,2}$ there was no need to move the system description into a rotating frame thus proving one of the benefits of the proposed approach. 

\section{Applications to quantum models}
\label{sec:applications}
In this section, we consider a few examples of interest:\\ 
I) The first example considers a qubit coupled to a bosonic bath. As the exact solution of such a model is well known, this example allows us to assess the goodness of the approximation proposed in this work. \\
II) Second, a qubit coupled to a finite bath is studied. Given the fact that the bath is finite-dimensional, this is typically considered a ``highly non-Markovian model''. This example allows us to showcase the numerical procedure at higher orders for both a case where the original dynamics is purely Hamiltonian and a case where a dissipative term is included. \\
III) Finally, an Ising spin chain is studied. This example allows us to show that the proposed procedure works in cases that are more general than the bipartite case.

\begin{figure*}
    \centering
    \begin{subfigure}{0.48\textwidth}
        \centering
        \includegraphics[width=\linewidth]{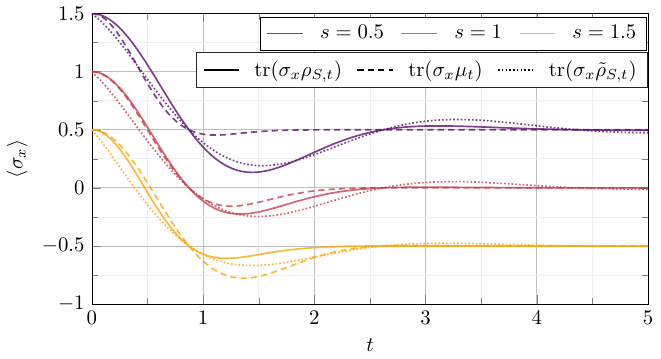}
    \end{subfigure}
    \begin{subfigure}{0.48\textwidth}
        \centering
        \includegraphics[width=\linewidth]{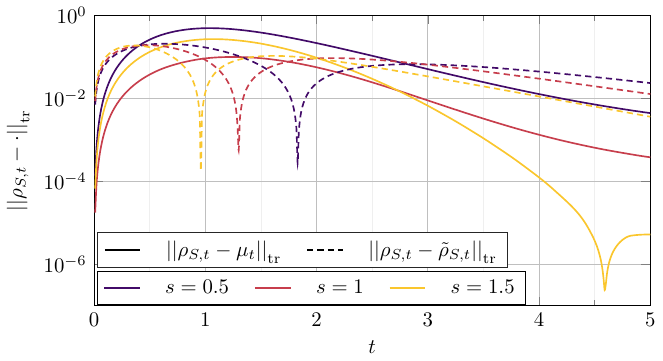}
    \end{subfigure}
    \caption{Simulation of the dephasing spin-boson model for $s=0.5,1,1.5$ where we compare the exact solution $\rho_{S,t}$ given in Eq.\eqref{eq:exact_boson}, the second order approximation $\mu_t$ derived in this work, given in Eq. \eqref{eq:time_ordered_boson} and the coarse-grained dynamics $\tilde{\rho}_{S,t}$ given in Eq. \eqref{eq:coarse_grained_boson}. Left: Evolution of the expectation value $\expect{\sigma_x}$ for the exact solution (continuous curves), second order approximation (dashed curve) and coarse-grained dynamics (dotted curve). The expectations values of the cases $s=0.5,1.5$ have been shifted by $\pm0.5$ respectively for graphical purposes. Right: evolution of the trace-norm error with respect to the exact solution committed by the second order approximation (continuous curve) and coarse-grained dynamics (dashed curve).   }
    \label{fig:dephasing_spin_boson}
\end{figure*}

\subsection{Dephasing spin boson model}
\label{sec:spin_boson}

Inspired by \cite{lidar2001completely, reina2002decoherence, breuer2002theory, lidar2020lecturenotestheoryopen}, in this section we consider a dephasing spin-boson model to further showcase the advances and drawbacks of the proposed method. Such a model is typically used to model the dephasing of qubits in quantum computers \cite{sun2025quantum}. Note that: i) in this section we make an exception to the assumption of finite-dimensional systems considered throughout the paper; ii) the model we are going to consider is exactly solvable hence the intent of this subsection is to showcase that the proposed approximation method provides results comparable to other approximation techniques, not to compare it against the exact solution of the model. 
Nevertheless, this example helps us to understand the degrees of approximation that the proposed method offers and how the process unfolds.

The model is composed of a spin $\um$ system $\Hc_S\simeq\Cb^2$ coupled to a bath $\Hc_B$ composed of $N_B$ bosonic modes, $\Hc = \Hc_S\otimes \Hc_B$. The model's dynamics is driven by the total Hamiltonian $H= H_S\otimes \one_B + \one_S \otimes H_B + H_{\rm int}$ where 
\begin{align}
    H_S &= \frac{g}{2}\sigma_z\\
    H_B &= \sum_{k=1}^{N_B} \omega_k \left(n_k+\frac{\one}{2}\right)\\
    H_{\rm int} &= \sigma_z \otimes \left(\sum_{k=1}^{N_B}\lambda_k b_k + \overline{\lambda}_k b_k^\dag\right)
\end{align} 
and where $n_k = b_k^\dag b_k$ and $b_k$ are the bosonic number and annihilation operator for mode $k$ respectively and satisfy $[b_k,b_l^\dag]=\delta_{k,l}\one$, $\omega_k\in\Rb$ is the frequency of the mode and $\lambda_k\in\Cb$ is the strength of the coupling between the mode and the spin.  

We assume that the initial state is factorized $\rho_0 = \rho_{S,0}\otimes\rho_B$ with $\rho_B = e^{-\beta H_B}/\tr[e^{-\beta H_B}]$ and some state $\rho_{S,0}\in\Df(\Hc_S)$ for the spin. We further assume that the projector $\Pc$ is $\Pc(\rho) = \tr_B(\rho)\otimes \rho_B$, which can be factorized into $\Rc(\rho) = \tr_B(\rho)$ and $\Jc(\sigma)=\sigma\otimes\rho_B$. Then we can verify $\Pc(\rho_0) = \rho_0$. 

This model has a known exact analytical solutions given by \cite[eq. (336) and (374)]{lidar2020lecturenotestheoryopen}:
\[\rho_{S,t}' = \begin{bmatrix}
[\rho_{S,0}]_{0,0}&e^{-2\gamma(t)t}[\rho_{S,0}]_{0,1}\\
e^{-2{\gamma(t)}t}[\rho_{S,0}]_{1,0}&[\rho_{S,0}]_{1,1}\\
\end{bmatrix}\]
where $\rho_{S,t}'$ is the state of the central spin in the rotating frame and  
\(\gamma(t) = \sum_k \alpha_k \frac{1-\cos(\omega_k t)}{t}\)
with $\alpha_k \equiv 2\frac{|\lambda_k|^2}{\omega_k^2}\coth\left(\frac{\beta\omega_k}{2}\right)\in\Rb.$
By taking the derivative of an off diagonal element we obtain:
\[\frac{d}{dt}[{\rho}_{S,t}']_{0,1} = -2\underbrace{\left[\gamma(t) + \frac{d\gamma(t)}{dt}t\right]}_{\equiv \xi(t)}[{\rho}_{S,t}']_{0,1}\]
where \(\xi(t) = \sum_k\alpha_k\omega_k\sin(\omega_k t)\).
This dynamics corresponds to the differential equation 
\( \frac{d}{dt}\rho_{S,t}' = \xi(t)\Dc_{\sigma_z}[\rho_{S,t}']\)
for the state in rotating frame or, equivalently, to
\begin{equation}
    \dot{\rho}_{S,t} = -i[H_S,\rho_{S,t}] + \xi(t) \Dc_{\sigma_z}(\rho_{S,t})
    \label{eq:exact_boson}
\end{equation} 
for the state in the lab frame. 

By following the procedure detailed in Sec. \ref{sec:second_order_bipartite} we can then proceed to compute the reduced time-dependent generator at second order $\Fc_{t,2}$. 
Denoting $\expect{X}_B \equiv \tr[\rho_B X]$, we recall that \cite[eq. (331)]{lidar2020lecturenotestheoryopen}: 
\begin{align*}
    \expect{b_k^\dag}_B &= \expect{b_k}_B = \expect{b_k^\dag b_l^\dag}_B = \expect{b_k b_l}_B = 0,\\
    \expect{b_j^\dag b_k}_B &= \frac{\delta_{j,k}}{e^{\beta\omega_k}-1}.
\end{align*}
We first compute the reduced Hamiltonian $\check{H} = \Jc^\dag(H)$ where $\Jc^\dag(X) = \tr_B[X(\one_S\otimes\rho_B)]$. Noting that $\tr_B[H_S\otimes\one_B(\one_S\otimes\rho_B)] = H_S \tr[\rho_B] = H_S$, $\tr_B[\one_S\otimes H_B(\one_S\otimes\rho_B)] = \one_S \expect{H_B}_B$ and
\begin{align*}
    &\tr_B\left[\sigma_z\otimes\left(\sum_{k=1}^{N_B}\lambda_k b_k + \overline{\lambda}_k b_k^\dag\right)(\one_S\otimes\rho_B)\right]\\ &\qquad = \sigma_z \left( \sum_{k=1}^{N_B}\lambda_k \cancel{\expect{b_k}_B} + \overline{\lambda}_k \cancel{\expect{b_k^\dag}_B} \right) = 0
\end{align*}
we have \(\check{H} = H_S \) as the term \(\one \expect{H_B}_B\) only contributes to a phase and can thus be removed, leading to the first-order term $\Fc_{(1)}(\mu) = -i[\check{H},\mu]$. The second order term is instead found to be $\Fc_{(2)}(\mu) = \varphi \Dc_{\sigma_z}(\mu)$ with $\varphi\equiv \sum_{k=1}^{N_B} \frac{2|\lambda_k|^2}{e^{\beta \omega_k}-1}$. All necessary calculations are detailed in Appendix \ref{sec:spin_boson_model_calculations} for the sake of clarity. 

Combining both terms we obtain an approximate reduced model described by the time-dependent Lindblad equation
\begin{equation}
    \dot{\mu}_t = -i[H_S,\mu_t] + \varphi t \Dc_{\sigma_z}(\mu_t).
    \label{eq:time_ordered_boson}
\end{equation} 
By comparing this equation with the exact solution of Eq. \eqref{eq:exact_boson} we can observe that both the Hamiltonian and noise operator are correctly recovered whereas the time-dependent dissipation strength $\xi(t)$ is approximated with $\varphi t$. Interestingly, the value $\varphi$ does not coincide with the first order Taylor expansion of $\xi(t)$ which is $\sum_k\alpha_k\omega_k$.

To better asses the quality of the reduced second order approximation we next compare the evolution of $\rho_{S,t}$ and that of $\mu_t$ with the coarse-grained approximation discussed in \cite[eq. (332)]{lidar2020lecturenotestheoryopen}, i.e. \(\tilde{\rho}_{S,t}\in\Df(\Hc_S)\) and
\begin{equation}
    \dot{\tilde{\rho}}_{S,t} = -i[H_S,\tilde{\rho}_{S,t}] + \gamma(\tau)\Dc_{\sigma_s}(\tilde{\rho}_{S,t})
    \label{eq:coarse_grained_boson}
\end{equation}
where the constant $\tau\in\Rb$ is specified later. 

The bath coupling coefficients are going to be defined as $\lambda_k = \sqrt{J(\omega_k)}$ \cite{tamascelli2020quantum} where $J(\omega)$ is the bath spectral density and is chosen to be 
\begin{equation}
    J(\omega) = \Lambda \omega_c^{1-s} \omega^s e^{-\frac{\omega}{\omega_c}}
\end{equation}
where $\Lambda\in\Rb_+$ is an overall constant, $\omega_c\in\Rb_+$ is the bath cut-off frequency and $s\in\Rb_+$ is the Ohmicity parameter which determines three cases of interest, i.e. Ohmic ($s=1$), sub-Ohmic($s<1$) and super-Ohmic ($s>1$) \cite{PhysRevB.92.195143}.

\begin{figure*}
    \centering
    \begin{subfigure}{0.48\textwidth}
        \centering
        \includegraphics[width=.8\linewidth]{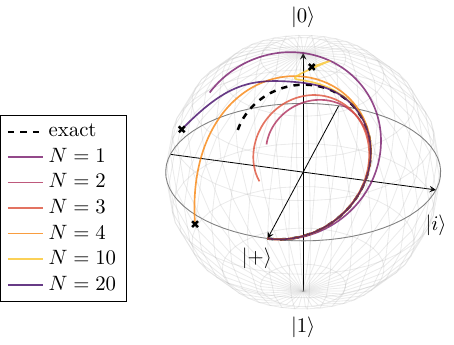}
    \end{subfigure}
    \begin{subfigure}{0.48\textwidth}
        \centering
        \includegraphics[width=.8\linewidth]{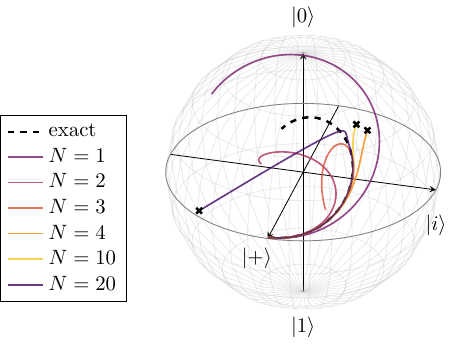}
    \end{subfigure}
    \begin{subfigure}{0.48\textwidth}
        \includegraphics[width=\linewidth]{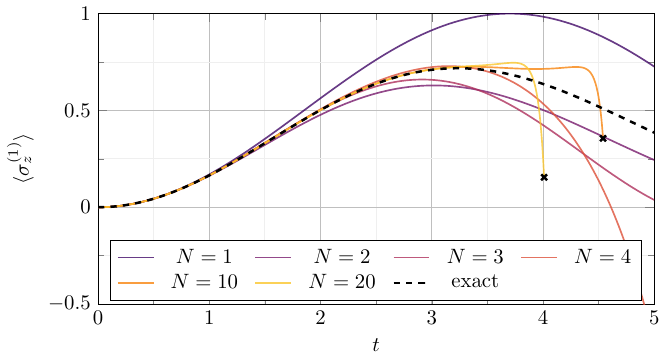}
    \end{subfigure}
    \begin{subfigure}{0.48\textwidth}
        \includegraphics[width=\linewidth]{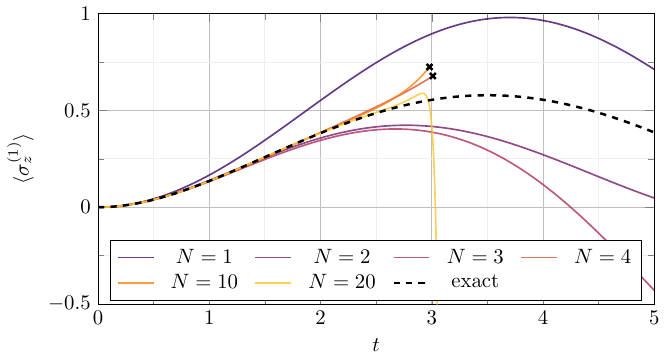}
    \end{subfigure}
    \begin{subfigure}{0.48\textwidth}
        \includegraphics[width=\linewidth]{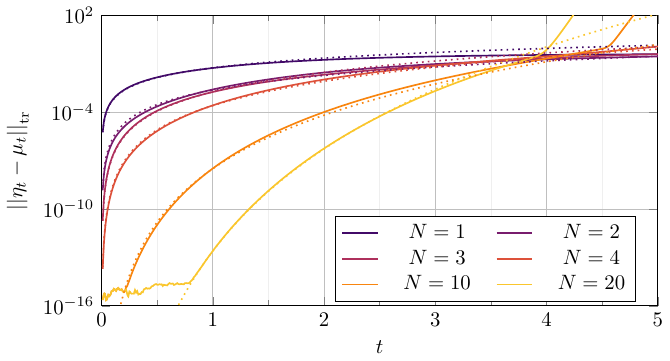}
    \end{subfigure}
    \begin{subfigure}{0.48\textwidth}
        \includegraphics[width=\linewidth]{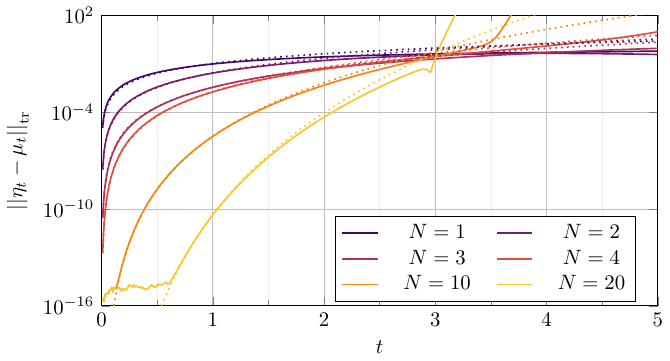}
    \end{subfigure}
    \caption{Simulations of the dissipative central spin model in the purely Hamiltonian setting (\(\Lambda=0\)) on the left and in the dissipative setting (\(\Lambda=0.8\)) on the right. The exact evolution and various degrees of approximations (\(N=1,2,3,4,10,20\)) are shown in each figure. The first row shows the trajectory of the state of the central state \(\rho_S\) in the Bloch sphere; the second row shows the magnetization of the central spin in the $\sigma_z$ direction versus time; the third row shows the trace norm error between the exact trajectory $\eta_t$ and the reduced one $\mu_t$, i.e. $\norm{\eta_t-\mu_t}_{\tr}$ versus time for different values of the approximation $N$ and with dotted curve representing a fitted $\alpha t^{N+1}$ curve. The black crosses denote points where the evolved state $\mu_t$ exits the set of density operators.}
    \label{fig:central_spin}
\end{figure*}

In Fig. \ref{fig:dephasing_spin_boson} the results of the simulations are shown in the three cases, i.e. $s=0.5,1,1.5$. The other parameters used in the simulations are $g=1.8$, $\omega_c =0.2$, $\Lambda=0.5$, $\beta=10$, $\tau=32/\omega_c$, and where $\omega_k$ are $N_B=100$, taken equally spaced between $0$ and $1$. Instead, the initial state has been fixed to $\rho_{S,0}=\mu_0 = \tilde{\rho}_{S,0} = \ketbra{+}{+}$. 
Importantly, one can observe in Fig. \ref{fig:dephasing_spin_boson} (right) that, in each of the three cases, the error committed by the second-order approximation, Eq. \eqref{eq:time_ordered_boson} is below that of the coarse-grained dynamics, Eq. \eqref{eq:coarse_grained_boson} for both the initial and final phases, while it is above for a central phase.     
This indicates that the second-order approximation performs better than the coarse-grained approximation in this case.

\subsection{Dissipative central spin model}
\label{sec:central_spin}

In this section we consider a dissipative central spin model to showcase the capabilities of the proposed technique. Similar models have been studied in the literature, see e.g. \cite{barnes2012nonperturbative, tiwari2025strong, jing2018decoherence, ferraro2008}. The model is composed by a system, i.e. $\Hc_S\simeq\Cb^2$ also denoted as spin $1$ and a bath composed of $N_B$ other spins, i.e. $\Hc_B\simeq\Cb^{2^{N_B}}$ counted from $2$ to $N_B+1$. 

\begin{figure*}[th]
    \centering
    \begin{subfigure}{0.49\textwidth}
        \centering
        \includegraphics[width=\linewidth]{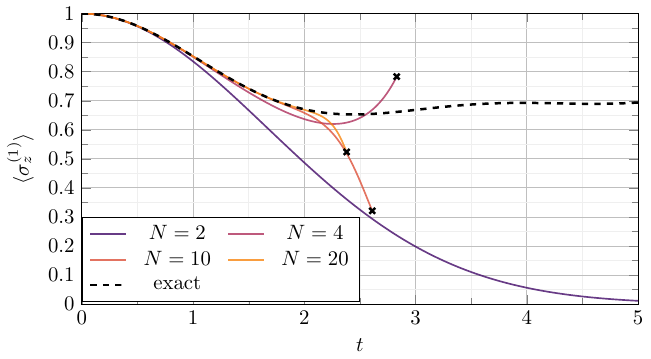}
    \end{subfigure}
    \begin{subfigure}{0.49\textwidth}
        \centering
        \includegraphics[width=\linewidth]{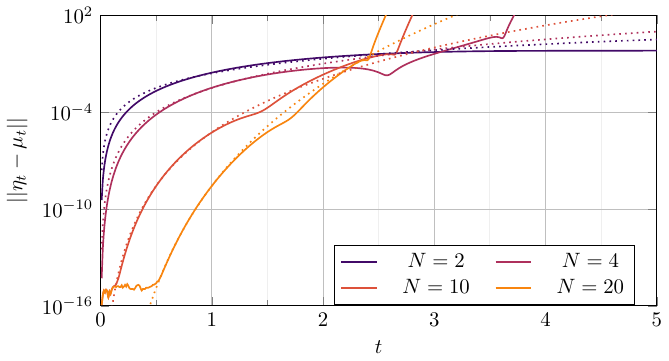}
    \end{subfigure}
    \caption{Simulation of the Ising spin chain. Magnetization of the first spin of the chain $\expect{\sigma_z^{(1)}}$ versus time on the left and the error norm $\norm{\eta_t-\mu_t}_2$ versus time on the right. Various approximation's degrees $N=2,4,10,20$ are shown in each figure. The black crosses denote the points where the evolved state $\mu_t$ exits the $n$-dimensional simplex, i.e. $\mu_t$ is no longer a probability distribution while the dotted curves on the right again represent fitted curves of $\alpha t^{N+1}$. }
    \label{fig:spin_chain}
\end{figure*}

The system Hamiltonian is described as $H = H_S + H_B + H_{\rm int}$ where:
\begin{align*}
    H_S &= \delta(\sigma_x^{(1)} + \sigma_z^{(1)})\\
    H_B &= \frac{\gamma}{4}\left(2J_x^2-\frac{N_B}{2}\one\right)\\
    H_{\rm int} &= \frac{\lambda}{2}\left(A_x \sigma_x^{(1)}J_x+ A_y \sigma_y^{(1)}J_y+A_z \sigma_z^{(1)}J_z\right)
\end{align*}
where $\sigma_u^{(j)}$ denotes the $u\in\{x,y,z\}$ Pauli matrix acting on the $j$-th spin and $J_u=\frac{1}{2}\sum_{k=2}^{N_B+1}\sigma_u^{(k)}$ denotes the total bath angular momentum operators. The bath is also in contact with a Markovian environment that dissipates information. This interaction is described by the noise operators $L_k = \Lambda\sigma_-^{(k)}$ for all $k=2,\dots,N_B+1$ where $\sigma_\pm^{(k)} = \frac{1}{2}(\sigma_x^{(k)}\pm\sigma_y^{(k)})$.

For this example we consider as an initial condition the factorized state $\rho_0 = \ketbra{+}{+} \otimes \rho_B$
where \[\rho_B = \frac{e^{-\beta H_B}}{\tr[e^{-\beta H_B}]}\]
is the thermal state at inverse temperature $\beta$. 
This allows us to define the reduction and injection maps as $\Rc(\cdot) = \tr_B(\cdot)$ and $\Jc(\cdot) = \cdot\otimes\rho_B$. Note that with this choice we have that $\Pc(\rho_0)=\rho_0$.
The reduced state then corresponds to the state of the central spin, i.e. 
\[\eta_t = \rho_S(t) \equiv \tr_B(\rho_t).\]

In Fig.\ref{fig:central_spin} the results of the simulation for this dissipative central spin model are shown. The simulation has been run in two settings: a purely Hamiltonian setting ($\Lambda=0$ on the left) and a dissipative setting ($\Lambda=0.8$ one the right). The rest of the parameters used were $N_B = 3$, $\delta=0.3$, $\lambda=0.1$, $\gamma=1$, $A_x=1.2$, $A_y=1.5$ $A_z=1.3$ and $\beta=50$. In the numerical simulations we can observe a few interesting aspects: 1) as expected, the error between the exact solution and the approximate solution $\norm{\eta_t-\mu_t}_tr$ decreases when increasing the approximation degree $N$; 2) the trajectory for approximation orders $N=1,2$ and even $3$ do not exit the Bloch sphere (note that we have proven the complete positivity for orders $N=1,2$ but not for $N=3$); 3) the simulations for $N>3$ are displaying Runge's phenomena at larger times, which is quite typical in these situations; 4) after some time, the evolved state $\eta_t$ for an approximation degree $N\geq4$ exits the Bloch sphere, indicating that the dynamics is not CPTP at all times for approximation degrees higher than $N\geq4$.

\subsection{Ising spin chain}
\label{sec:ising_model}

Although the idea of approximating a large model with a smaller one that is LTV is common in the setting of a bipartite system-environment couple, this is far from the only case where such an approach is useful. More in general, the same approaches can be adapted to cases where one wants to describe the evolution of only a small set of variables of interest of the original model. For this reason, in this section we showcase the use of the same methods presented before to reproduce the probability distribution in the standard basis. 

In order to show that the proposed method works also in cases that are not bi-partite we here consider a spin-chain model. More specifically we consider a model composed of $N_B$ spins, i.e. $\Hc\simeq\Cb^{n}$ where $n = 2^{N_B}$ whose dynamics is driven by the Ising Hamiltonian 
\begin{equation}
    H = \sum_{j=1}^{N_B} h\sigma_z^{(j)} + A \sum_{j=1}^{N_B-1} \sigma_{x}^{(j)}\sigma_{x}^{(j+1)}.
\end{equation}    
In this example, we consider $\Pc$ to be the CPTP projector on the diagonal $\Pc(\rho) = \sum_{j=0}^{n-1} \ketbra{j}{j}\rho\ketbra{j}{j}$ which can be factorized into $\Rc(\rho) = \sum_{j=0}^{n-1} \ketbra{j}{j}\rho \ket{j}$ and $\Jc(\eta)=\sum_{j=0}^{n-1} \ketbra{j}{j} \bra{j}\eta$. Note that, in such a case, $\eta=\Rc(\rho)\in\Rb^n$ is a probability distribution, i.e. $\eta_i\geq0$ and $\onev^T\eta=1$ where $\onev\in\Rb^n$ is the vector of all ones. This implies that $\Fc_t\in\Rb^{n\times n}$. 

As an initial condition we consider $\rho_0=\ketbra{0}{0}$. Trivially, $\Pc(\rho_0)=\rho_0$.

Numerically, we verified that $\Fc_{(1)}=\Fc_{(3)}=0$ while $\Fc_{(2)}$ is graphically represented in Fig. \ref{fig:matrix} where one can verify that $\Fc_{(2)}$ is Metzler (a matrix $A$ is Metzler if $A_{i,j}\geq0$ for all $i\neq j$) and is such that $\onev^T \Fc_{(2)} = 0$. Those two conditions are the classical equivalent to the properties characterizing a Lindblad generator and in fact imply that $e^{\Fc_{(2)}t}$ is a stochastic matrix for all $t\geq0$. This directly implies that $\Tb e^{\int_0^t \Fc_{s,2} ds}$ is CPTP, or stochastic in this case, for all $t\geq0$. 

\begin{figure}[th]
    \centering
    \includegraphics[width=.6\linewidth]{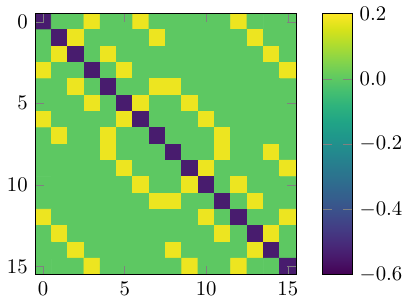}
    \caption{Graphical representation of the term $\Fc_{(2)}$ for the time-dependent generator $\Fc_t$ of the Ising spin chain model.}
    \label{fig:matrix}
\end{figure}

Fig. \ref{fig:spin_chain} shows the results of the simulations of the Ising spin chain model with $N_B = 4$, $A= 0.3$, $h=0.36$ where we can observe similar results to the other simulations included in this work: increasing the degree of approximation decreases the error $\norm{\eta_t-\mu_t}_2$ at low times; for $N=2$ the approximate evolved state $\mu_t$ remains a probability distribution, while for $N\geq4$, after a certain time, the state is no longer a probability distribution. 

\section{Conclusions and outlook}
\label{sec:conclusion}

In this work, we explored a novel method for approximating the dynamics of a large autonomous system projected onto a given subspace with a time-dependent generator. The problem is inspired by the derivation of TCL ME in quantum physics but has been treated here in the general case of linear autonomous systems. 

The main contribution of this work consists of a novel method for computing the coefficients of the time-dependent model that is polynomial in time and that approximates the evolution of the projected dynamics for small times. Importantly, the coefficients of the polynomial can be determined using a recursive formula.  

Benefits of the proposed reduction method include: the method does not require to move into a rotating frame, making the calculations simpler; the iterative form of the coefficients allows for easy calculations of the terms; for lower orders the method offers an advantage with respect to truncating the exponential, as shown in Sec. \ref{sec:linear_testbed}; the method does not rely on a perturbation on a small parameter $\varepsilon$ allowing for analysis of systems in strong coupling regimes; when the original dynamics is purely Hamiltonian, the reduced time-dependent model at the lowest two orders is ensured to generate a CPTP two-parameter semigroup at all times. All of this comes at the cost of having a model that approximates the original evolution in the short-time regime. 

Future work shall focus on: 
i) extending Theorem \ref{thm:time_ordered_expansion} and Theorem \ref{thm:approx} to the case of analytic generators; 
ii) lifting the assumption $Px_0 =x_0$; 
iii) generalize the results presented here to include control inputs (both linear and bilinear);
iv) apply the proposed procedure to other quantum models in the strong coupling regime to study models that were not accessible before and further test the capabilities of the method; 
v) combine the classical model reduction methods and the time-dependent model reduction proposed here to see if the combination of the two methods can allow to improve the accuracy of the reduced model. Note that a first step in this direction has been taken in the physics literature by \cite{breuer2006,breuer2007}, where, including extra degrees of freedom, has been shown to improve the accuracy of the derived TCL ME.

Given the numerical examples of the central spin model presented in Sec. \ref{sec:central_spin} we do not expect the reduced model to retain the CP property for orders higher than $N=3$. However, it is possible that, in the Hamiltonian case, for $N=3$ one can prove that the evolution of the reduced model is necessarily CP at all times, as none of the simulation carried out while preparing this work contradicted this statement.

\section{Acknowledgments}
The author would like to thank Francesco Ticozzi for his contribution in the initial idea of this work and for his constant support. The author also thanks Lorenza Viola, Michiel Burgelman, Augusto Ferrante, and Giacomo Baggio for useful discussions on the matters of this work.

\paragraph{Code availability:} The code necessary to reproduce all the numerical experiments presented in this work is available at \cite{Myrepo}. 

\paragraph{Funding:}
The author has been supported by the Italian Ministry of University and Research under the PRIN project “Extracting essential information and dynamics from complex networks”, grant no. 2022MBC2EZ.

\bibliography{ref}

\begin{thebibliography}{10}

\bibitem{rosenbrock1970state}
H.~H. Rosenbrock, {\em State-space and multivariable theory}, vol.~3.
\newblock Wiley Interscience Division, 1970.

\bibitem{antoulas}
A.~C. Antoulas, {\em Approximation of Large-Scale Dynamical Systems}.
\newblock Society for Industrial and Applied Mathematics, Philadelphia: Advances in Design and Control, vol. 6;, 2005.

\bibitem{gugercin2004survey}
S.~Gugercin and A.~C. Antoulas, ``A survey of model reduction by balanced truncation and some new results,'' {\em International Journal of Control}, vol.~77, no.~8, pp.~748--766, 2004.

\bibitem{astolfi_moment_matching}
A.~Astolfi, ``Model reduction by moment matching for linear and nonlinear systems,'' {\em IEEE Transactions on Automatic Control}, vol.~55, no.~10, pp.~2321--2336, 2010.

\bibitem{padoanmomentmatching}
A.~Padoan, ``Model reduction by least squares moment matching for linear and nonlinear systems,'' {\em IEEE Transactions on Automatic Control}, vol.~68, no.~12, pp.~8267--8274, 2023.

\bibitem{breuer2002theory}
H.-P. Breuer, F.~Petruccione, {\em et~al.}, {\em The theory of open quantum systems}.
\newblock Oxford University Press on Demand, 2002.

\bibitem{shibata_generalized_1977}
F.~Shibata, Y.~Takahashi, and N.~Hashitsume, ``A generalized stochastic liouville equation. {Non}-{Markovian} versus memoryless master equations,'' {\em Journal of Statistical Physics}, vol.~17, Oct. 1977.

\bibitem{chaturvedi_time-convolutionless_1979}
S.~Chaturvedi and F.~Shibata, ``Time-convolutionless projection operator formalism for elimination of fast variables. {Applications} to {Brownian} motion,'' {\em Zeitschrift für Physik B Condensed Matter}, vol.~35, Sept. 1979.

\bibitem{ascher1997implicit}
U.~M. Ascher, S.~J. Ruuth, and R.~J. Spiteri, ``Implicit-explicit runge-kutta methods for time-dependent partial differential equations,'' {\em Applied Numerical Mathematics}, vol.~25, no.~2-3, pp.~151--167, 1997.

\bibitem{nakajima}
S.~Nakajima, ``On quantum theory of transport phenomena: Steady diffusion,'' {\em Progress of Theoretical Physics}, vol.~20, pp.~948--959, 12 1958.

\bibitem{zwanzig}
R.~Zwanzig, ``Ensemble method in the theory of irreversibility,'' {\em The Journal of Chemical Physics}, vol.~33, pp.~1338--1341, 11 1960.

\bibitem{nestmann2019time}
K.~Nestmann and C.~Timm, ``Time-convolutionless master equation: Perturbative expansions to arbitrary order and application to quantum dots,'' {\em arXiv preprint arXiv:1903.05132}, 2019.

\bibitem{cerrillo2014non}
J.~Cerrillo and J.~Cao, ``Non-markovian dynamical maps: Numerical processing of open quantum trajectories,'' {\em Phys. Rev. Lett.}, vol.~112, p.~110401, Mar 2014.

\bibitem{cygorek2025timenonlocalversustimelocallongtime}
M.~Cygorek and E.~M. Gauger, ``Time-nonlocal versus time-local long-time extrapolation of non-markovian quantum dynamics,'' 2025.

\bibitem{gasbarri2018recursive}
G.~Gasbarri and L.~Ferialdi, ``Recursive approach for non-markovian time-convolutionless master equations,'' {\em Phys. Rev. A}, vol.~97, p.~022114, Feb 2018.

\bibitem{grigoletto2024exact}
T.~Grigoletto, Y.~Tao, F.~Ticozzi, and L.~Viola, ``Exact model reduction for continuous-time open quantum dynamics,'' {\em Quantum}, 2025.

\bibitem{grigoletto2025quantummodelreductioncontinuoustime}
T.~Grigoletto, C.~Pellegrini, and F.~Ticozzi, ``Quantum model reduction for continuous-time quantum filters,'' 2025.

\bibitem{lawrence1979volterra}
P.~Lawrence, ``A volterra series approach to the solution of time varying linear differential equations,'' {\em International Journal of Control}, vol.~30, no.~4, pp.~581--586, 1979.

\bibitem{Giscard_2015}
P.-L. Giscard, K.~Lui, S.~J. Thwaite, and D.~Jaksch, ``An exact formulation of the time-ordered exponential using path-sums,'' {\em Journal of Mathematical Physics}, vol.~56, May 2015.

\bibitem{Tokieda_2025}
M.~Tokieda and A.~Riva, ``Time-convolutionless master equation applied to adiabatic elimination,'' {\em Physical Review A}, vol.~111, May 2025.

\bibitem{nestmann_time-convolutionless_2019}
K.~Nestmann and C.~Timm, ``Time-convolutionless master equation: {Perturbative} expansions to arbitrary order and application to quantum dots,'' Mar. 2019.
\newblock arXiv:1903.05132 [cond-mat].

\bibitem{breuer2007}
H.-P. Breuer, ``Non-markovian generalization of the lindblad theory of open quantum systems,'' {\em Phys. Rev. A}, vol.~75, p.~022103, Feb 2007.

\bibitem{breuer2006}
H.-P. Breuer, J.~Gemmer, and M.~Michel, ``Non-markovian quantum dynamics: Correlated projection superoperators and hilbert space averaging,'' {\em Phys. Rev. E}, vol.~73, p.~016139, Jan 2006.

\bibitem{bourchtein2021analytic}
A.~Bourchtein and L.~Bourchtein, ``Analytic functions and their properties,'' in {\em Complex Analysis}, pp.~61--127, Springer, 2021.

\bibitem{adiabadic-elimination}
R.~Azouit, F.~Chittaro, A.~Sarlette, and P.~Rouchon, ``Towards generic adiabatic elimination for bipartite open quantum systems,'' {\em Quantum Science and Technology}, vol.~2, no.~4, p.~044011, 2017.

\bibitem{lindblad1976generators}
G.~Lindblad, ``On the generators of quantum dynamical semigroups,'' {\em Communications in Mathematical Physics}, vol.~48, pp.~119--130, 1976.

\bibitem{Gorini:1975nb}
V.~Gorini, A.~Kossakowski, and E.~C.~G. Sudarshan, ``{Completely Positive Dynamical Semigroups of N Level Systems},'' {\em J. Math. Phys.}, vol.~17, p.~821, 1976.

\bibitem{wolf2012quantum}
M.~M. Wolf, ``Quantum channels \& operations: Guided tour,'' 2012.
\newblock Lecture notes available at \url{https://citeseerx.ist.psu.edu/document?repid=rep1&type=pdf&doi=afa58291b0b8bd47504acb1ab8f553f0b37685cf}.

\bibitem{petz2007quantum}
D.~Petz, {\em Quantum information theory and quantum statistics}.
\newblock Springer Science \& Business Media, 2007.

\bibitem{blackadar2006operator}
B.~Blackadar, {\em Operator algebras: theory of C*-algebras and von Neumann algebras}, vol.~122.
\newblock Springer Science \& Business Media, 2006.

\bibitem{lindblad1999general}
G.~Lindblad, ``A general no-cloning theorem,'' {\em Letters in Mathematical Physics}, vol.~47, no.~2, pp.~189--196, 1999.

\bibitem{ticozzi2017alternating}
F.~Ticozzi, L.~Zuccato, P.~D. Johnson, and L.~Viola, ``Alternating projections methods for discrete-time stabilization of quantum states,'' {\em IEEE Transactions on Automatic Control}, vol.~63, no.~3, pp.~819--826, 2017.

\bibitem{bialonczyk2018application}
M.~Bia{\l}o{\'n}czyk, A.~Jamio{\l}kowski, and K.~{\.Z}yczkowski, ``Application of shemesh theorem to quantum channels,'' {\em Journal of Mathematical Physics}, vol.~59, no.~10, 2018.

\bibitem{blume2010information}
R.~Blume-Kohout, H.~K. Ng, D.~Poulin, and L.~Viola, ``Information-preserving structures: A general framework for quantum zero-error information,'' {\em Physical Review A}, vol.~82, no.~6, p.~062306, 2010.

\bibitem{wolf2010inverseeigenvalueproblemquantum}
M.~M. Wolf and D.~Perez-Garcia, ``The inverse eigenvalue problem for quantum channels,'' 2010.

\bibitem{benatti2024quantum}
F.~Benatti, D.~Chru{\'s}ci{\'n}ski, and G.~Nichele, ``Quantum versus classical p-divisibility,'' {\em Physical Review A}, vol.~110, no.~5, p.~052212, 2024.

\bibitem{colla2025unveiling}
A.~Colla, H.-P. Breuer, and G.~Gasbarri, ``Unveiling coherent dynamics in non-markovian open quantum systems: exact expression and recursive perturbation expansion,'' {\em arXiv preprint arXiv:2506.04097}, 2025.

\bibitem{colla2025recursive}
A.~Colla, H.-P. Breuer, and G.~Gasbarri, ``Recursive perturbation approach to time-convolutionless master equations: Explicit construction of generalized lindblad generators for arbitrary open systems,'' {\em arXiv preprint arXiv:2506.04095}, 2025.

\bibitem{tit2023}
T.~Grigoletto and F.~Ticozzi, ``Model reduction for quantum systems: Discrete-time quantum walks and open markov dynamics,'' {\em IEEE Transactions on Information Theory}, 2025.

\bibitem{johnson2015general}
P.~D. Johnson, F.~Ticozzi, and L.~Viola, ``General fixed points of quasi-local frustration-free quantum semigroups: from invariance to stabilization,'' {\em arXiv preprint arXiv:1506.07756}, 2015.

\bibitem{lidar2001completely}
D.~A. Lidar, Z.~Bihary, and K.~B. Whaley, ``From completely positive maps to the quantum markovian semigroup master equation,'' {\em Chemical Physics}, vol.~268, no.~1-3, pp.~35--53, 2001.

\bibitem{reina2002decoherence}
J.~H. Reina, L.~Quiroga, and N.~F. Johnson, ``Decoherence of quantum registers,'' {\em Phys. Rev. A}, vol.~65, p.~032326, Mar 2002.

\bibitem{lidar2020lecturenotestheoryopen}
D.~A. Lidar, ``Lecture notes on the theory of open quantum systems,'' 2020.

\bibitem{sun2025quantum}
K.~Sun, M.~Kang, H.~Nuomin, G.~Schwartz, D.~N. Beratan, K.~R. Brown, and J.~Kim, ``Quantum simulation of spin-boson models with structured bath,'' {\em Nature Communications}, vol.~16, no.~1, p.~4042, 2025.

\bibitem{tamascelli2020quantum}
D.~Tamascelli, C.~Benedetti, H.-P. Breuer, and M.~G. Paris, ``Quantum probing beyond pure dephasing,'' {\em New Journal of Physics}, vol.~22, no.~8, p.~083027, 2020.

\bibitem{PhysRevB.92.195143}
E.~Y. Wilner, H.~Wang, M.~Thoss, and E.~Rabani, ``Sub-ohmic to super-ohmic crossover behavior in nonequilibrium quantum systems with electron-phonon interactions,'' {\em Phys. Rev. B}, vol.~92, p.~195143, Nov 2015.

\bibitem{barnes2012nonperturbative}
E.~Barnes, {\L}.~Cywi{\'n}ski, and S.~Das~Sarma, ``Nonperturbative master equation solution of central spin dephasing dynamics,'' {\em Physical review letters}, vol.~109, no.~14, p.~140403, 2012.

\bibitem{tiwari2025strong}
D.~Tiwari, B.~Bose, and S.~Banerjee, ``Strong coupling non-markovian quantum thermodynamics of a finite-bath system,'' {\em The Journal of Chemical Physics}, vol.~162, no.~11, 2025.

\bibitem{jing2018decoherence}
J.~Jing and L.-A. Wu, ``Decoherence and control of a qubit in spin baths: an exact master equation study,'' {\em Scientific reports}, vol.~8, no.~1, p.~1471, 2018.

\bibitem{ferraro2008}
E.~Ferraro, H.-P. Breuer, A.~Napoli, M.~Guccione, and A.~Messina, ``Non-markovian master equation for the xx central spin model,'' in {\em 2008 2nd ICTON Mediterranean Winter}, pp.~1--4, 2008.

\bibitem{Myrepo}
T.~Grigoletto, ``Time ordered expansion -- {G}it{H}ub repository.'' \url{https://github.com/tommygrigo/Time_ordered_expansion}, 2025.

\bibitem{Dyson}
F.~J. Dyson, ``The radiation theories of tomonaga, schwinger, and feynman,'' {\em Phys. Rev.}, vol.~75, pp.~486--502, Feb 1949.

\bibitem{argeri2014magnus}
M.~Argeri, S.~Di~Vita, P.~Mastrolia, E.~Mirabella, J.~Schlenk, U.~Schubert, and L.~Tancredi, ``Magnus and dyson series for master integrals,'' {\em Journal of High Energy Physics}, vol.~2014, no.~3, pp.~1--32, 2014.

\bibitem{sakurai2020modern}
J.~J. Sakurai and J.~Napolitano, {\em Modern quantum mechanics}.
\newblock Cambridge University Press, 2020.

\end{thebibliography}

\onecolumngrid
\appendix
\section{Technical results and proofs for the time-ordered expansion}
\label{sec:thechnical_results:time_ordered}
Before proving the main result, Theorem \ref{thm:time_ordered_expansion}, we shall develop some intuition about how we prove it.
\begin{example}
To provide some intuition about Theorem \ref{thm:time_ordered_expansion} let us consider a simple case: \(F_t = F_{(1)} + t F_{(2)} +t^2 F_{(3)}\). 
Let us first compute the first few terms of the Dyson expansion \cite{Dyson} of $\Tb e^{\int_0^t F_s ds}$, i.e. $\Tb e^{\int_0^t F_s ds} = \sum_{n=0}^{+\infty} \Phi_{(n),t}$.
At zeroth order we have $\Phi_{(0),t} = \Ic$ while at orders $1,2$ and $3$ we have:
\begin{align*}
    \Phi_{(1),t} &= \int_0^t F_{s} ds = \int_0^t (F_{(1)} + sF_{(2)} + s^{2} F_{(3)}) ds = 
    t F_{(1)} + t^2\frac{F_{(2)}}{2} + t^3 \frac{F_{(3)}}{3} ,
\end{align*}
\begin{align*}
    \Phi_{(2),t} &= \int_0^t dt_1\, \int_0^{t_1} dt_2\, F_{t_1} F_{t_2} 
    = \sum_{j=0}^{1} \int_0^t dt_1\, F_{t_1} F_{(j+1)} \int_0^{t_1} dt_2\,  t_2^j 
    = \sum_{j=0}^{1} \int_0^t dt_1\, F_{t_1}F_{(j+1)}  \frac{t_1^{j+1}}{j+1}\\
    & = \sum_{k,j=0}^{1} F_{(k+1)}F_{(j+1)} \int_0^t dt_1\, t_1^k \frac{t_1^{j+1}}{j+1} = \sum_{k,j=0}^{1} F_{(k+1)}F_{(j+1)} \frac{t^{k+j+2}}{(k+j+2)(j+1)} = \sum_{k,j=1}^{2} F_{(k)}F_{(j)} \frac{t^{k+j}}{(j+k)j}\\
    &= \underbrace{F_{(1)}^2 \frac{t^2}{2}}_{j=k=1}
    + \underbrace{F_{(2)}F_{(1)}\frac{t^3}{3}}_{j=1,k=2} 
    + \underbrace{F_{(1)}F_{(2)}\frac{t^3}{6}}_{j=2,k=1} 
    + O(t^4),
\end{align*}
\begin{align*}
    \Phi_{(3),t} &= \int_0^t dt_1 \int_0^{t_1} dt_2 \int_0^{t_2} dt_3 F_{t_1} F_{t_2} F_{t_3} = \int_0^t dt_1 F_{t_1} \underbrace{\int_0^{t_1} dt_2 \int_0^{t_2} dt_3  F_{t_2} F_{t_3}}_{= \sum_{j,k=1}^{2} F_{(k)}F_{(j)} \frac{t_1^{k+j}}{(j+k)j}}\\ 
    &= \sum_{h,k,j=1}^{2} F_{(h)}F_{(k)}F_{(j)} \int_0^t dt_1\, t_1^h \frac{t_1^{k+j}}{(j+k)j} = \sum_{h,k,j=1}^{2} F_{(h)}F_{(k)}F_{(j)}  \frac{t^{h+k+j}}{(h+k+j)(j+k)j} 
     = \underbrace{F_{(1)}^3 \frac{t^{3}}{3!}}_{h=k=j=1} + O(t^4).
\end{align*}
Grouping together the terms of $\Tb e^{\int_0^t F_s ds}$ by the order of the exponent of $t$, we have 
\begin{align*}
    \Tb e^{\int_0^t F_s ds} &= \underbrace{\Ic}_{E_{(0)}} + t \underbrace{F_{(1)}}_{E_{(1)}} + t^2 \underbrace{\left( \frac{F_{(2)}}{2} + \frac{F_{(1)}^2}{2} \right)}_{E_{(2)}} + t^3 \underbrace{\left(\frac{F_{(3)}}{3}+\frac{F_{(1)}F_{(2)}}{6} +\frac{F_{(2)}F_{(1)}}{3} + \frac{F_{(1)}^3}{3!} \right)}_{E_{(3)}} +O(t^4)
\end{align*}
which coincides with the first few terms obtained using the equation given in Theorem \ref{thm:time_ordered_expansion}:
\begin{align*}
    E_{(0)} &= \Ic,\\
    E_{(1)} &= F_{(1)}E_{(0)} = F_{(1)},\\
    E_{(2)} &= \frac{1}{2}(F_{(1)}E_{(1)} +  F_{(2)}E_{(0)}) = \frac{1}{2}(F_{(1)}^2 + F_{(2)}),\\
    E_{(2)} &= \frac{1}{3}(F_{(1)}E_{(2)} +  F_{(2)}E_{(1)} + F_{(3)}E_{(0)}) = \frac{1}{3}(\frac{1}{2}F_{(1)}^3 + \frac{1}{2}F_{(1)}F_{(2)} + F_{(2)}F_{(1)} +F_{(3)}).
\end{align*}
\hfill\qed
\end{example}
\begin{proof}[Theorem \ref{thm:time_ordered_expansion}]
    We shall start by proving that $\Tb e^{\int_0^t F_s ds} = \sum_{n=0}^{+\infty} \Phi_{(n),t}$ where $\Phi_{(0,t)} = \Ic$ and, for $n>0$, \[\Phi_{(n),t} = \sum_{k_1,\dots,k_n=1}^{N+1} t^{\sum_{j=1}^n k_j} \frac{ F_{(k_1)} \dots F_{(k_n)}}{\prod_{j=1}^n \sum_{h=n-j+1}^n k_h}.\]
    First of all, using the Dyson series expansion for the time-ordered exponential $\Tb e^{\int_0^t F_s ds}$, see e.g. \cite{Dyson, argeri2014magnus, sakurai2020modern}, we have that: 
    \[\Tb e^{\int_0^t F_s ds} = \sum_{n=0}^{+\infty} \Phi_{(n),t} \qquad\text{ with }\qquad 
    \Phi_{(n),t} = \int_0^t dt_1\, \int_0^{t_1} dt_2\, \dots \int_0^{t_{n-1}} dt_n\, F_{t_1}F_{t_2}\dots F_{t_n}\,    .\]
    Next, by noticing that \(\Phi_{(n),t} = \int_0^t ds\, F_{s} \Phi_{(n-1),s} \) we can prove the statement by induction.
    The case $n=1$ is proven by simple calculations:
    \begin{align*}
        \Phi_{(1),t} &= \int_0^t ds\, F_{s}  = \int_0^t ds\, \sum_{k_1=0}^N s^{k_1}F_{(k_1+1)} = \sum_{k_1=0}^{N} F_{(k_1+1)} \int_0^t ds\, s^{k_1}  
        = \sum_{k_1=0}^{N} F_{(k_1+1)} \frac{t^{k_1+1}}{k_1+1} 
         = \sum_{k_1=1}^{N+1} F_{(k_1)} \frac{t^{k_1}}{k_1}.
    \end{align*}
    Note that $\int_0^t ds \sum_{k=0}^N = \sum_{k=0}^{N} \int_0^t ds$ under the assumption that $N$ is finite, i.e. $N<\infty$. 
    We next assume that the statement holds for $n$, i.e. 
    \[\Phi_{(n),t} = \sum_{k_1,\dots,k_n=1}^{N+1} t^{\sum_{j=1}^n k_j} \frac{F_{(k_1)}\dots F_{(k_n)}}{\prod_{j=1}^n \sum_{h=n-j+1}^n k_h}.\]
    Then 
    \begin{align*}
        \Phi_{(n+1),t} &= \int_0^t ds\, F_s \Phi_{(n),s} 
        = \int_0^t ds\, F_s \sum_{k_1,\dots,k_n=1}^{N+1} s^{\sum_{j=1}^n k_j} \frac{ F_{(k_n)}\dots F_{(k_1)}}{\prod_{j=1}^n \sum_{h=n-j+1}^n k_h} \\
        &= \int_0^t ds\, \sum_{k=0}^{N} s^k F_{(k+1)} \sum_{k_1,\dots,k_n=1}^{N+1}  s^{\sum_{j=1}^n k_j} \frac{ F_{(k_1)} \dots F_{(k_n)}}{\prod_{j=1}^n \sum_{h=n-j+1}^n k_h} \\
        &= \sum_{k=0}^{N} \sum_{k_1,\dots,k_n=1}^{N+1} F_{(k+1)} \frac{ F_{(k_1)}\dots F_{(k_n)}}{\prod_{j=1}^n \sum_{h=n-j+1}^n k_h}  \int_0^t s^{k+\sum_{j=1}^n k_j} ds \\
        &= \sum_{k=0}^{N} \sum_{k_1,\dots,k_n=1}^{N+1} \frac{t^{k+1+\sum_{j=1}^n k_j}}{k+1+\sum_{j=1}^n k_j} F_{(k+1)} \frac{F_{(k_1)}\dots F_{(k_n)}}{\prod_{j=1}^n \sum_{h=n-j+1}^n k_h}
    \end{align*}
    \begin{align*} \Phi_{(n+1),t}       
        &= \sum_{k=1}^{N+1} \sum_{k_1,\dots,k_n=1}^{N+1} \frac{t^{k+\sum_{j=1}^n k_j}}{k+\sum_{j=1}^n k_j} F_{(k)} \frac{F_{(k_1)}\dots F_{(k_n)}}{\prod_{j=1}^n \sum_{h=n-j+1}^n k_h} 
        = \sum_{k_1=1}^{N+1} \sum_{k_2,\dots,k_{n+1}=1}^{N+1} \frac{t^{k_1+\sum_{j=2}^{n+1} k_j}}{k_1+\sum_{j=2}^{n+1} k_j} \frac{F_{(k_1)}F_{(k_2)}\dots F_{(k_{n+1})}}{\prod_{j=1}^{n} \sum_{h=n-j+2}^{n+1} k_h}\\
        &= \sum_{k_1,\dots,k_{n+1}=1}^{N+1} \frac{t^{\sum_{j=1}^{n+1} k_j}}{\sum_{j=1}^{n+1} k_j} \frac{F_{(k_1)}\dots F_{(k_{n+1})}}{\prod_{j=1}^{n} \sum_{h=n-j+2}^{n+1} k_h}
        = \sum_{k_1,\dots,k_{n+1}=1}^{N+1} t^{\sum_{j=1}^{n+1} k_j} \frac{F_{(k_1)}\dots F_{(k_{n+1})}}{\prod_{j=1}^{n+1} \sum_{h=n-j+2}^{n+1} k_h}.
    \end{align*}
    thus proving the statement.

    We next prove that \(\Tb e^{\int_0^t F_s ds} = \sum_{f=0}^{+\infty} t^k E_{(k)}\) where $E_{(0)} = \Ic$ and, for $k>0$,
    \[ E_{(k)} = \sum_{n=1}^k \sum_{\substack{{k_1,\dots,k_{n} = 1}\\{\text{s.t. }\sum_{j=1}^{n}k_{j}=k}}}^{k} \frac{F_{(k_1)}\dots F_{(k_n)} }{\prod_{j=1}^n \sum_{h=n-j+1}^n k_h}. \]
    To prove this claim we can start by noticing that each term $\Phi_{(n),t}$ contains powers $t^k$ of order $k\geq n$. Hence, to construct $E_{(k)}$ of order $k$ we need to collect and sum all the coefficients that multiply $t^k$ among the terms $\Phi_{(n),t}$ for $1\leq n\leq k$. Given $\Phi_{(n),t}$ with $n\leq k$ we have that the coefficient multiplying $t^k$ must satisfy the constraint $\sum_{j=1}^n k_j = k$. 
    Thus, the coefficient of $\Phi_{(n),t}$ that multiply $t^k$ are
    \[\sum_{\substack{{k_1,\dots,k_{n} = 1}\\{\text{s.t. }\sum_{j=1}^{n}k_{j}=k}}}^{N+1} \frac{ F_{(k_1)}\dots F_{(k_n)}}{\prod_{j=1}^n \sum_{h=n-j+1}^n k_h} = \sum_{\substack{{k_1,\dots,k_{n} = 1}\\{\text{s.t. }\sum_{j=1}^{n}k_{j}=k}}}^{k} \frac{ F_{(k_1)}\dots F_{(k_n)}}{\prod_{j=1}^n \sum_{h=n-j+1}^n k_h}\]
    where we used the fact that, if any $k_j>k$ then $\sum_{j=1}^n k_j>k$.  
    Noticing that $\prod_{j=1}^n \sum_{h=n-j+1}^n k_h = k(\prod_{j=1}^{n-1} \sum_{h=n-j+1}^n k_h)$ and summing over $1\leq n\leq k$ we obtain, for $k>0$, the second claim, i.e.  
    \[E_{(k)} = \sum_{n=1}^{k} \sum_{\substack{{k_1,\dots,k_{n} = 1}\\{\text{s.t. }\sum_{j=1}^{n}k_{j}=k}}}^{k} \frac{ F_{(k_1)}\dots F_{(k_n)}}{k(\prod_{j=1}^{n-1} \sum_{h=n-j+1}^n k_h)}.\]

    We are now finally ready to prove the main claim of the theorem, that is: \(E_{(k)} = \frac{1}{k}\sum_{s=1}^{k}F_{(s)}E_{(k-s)}.\)
    Recalling that $E_{(0)}=\Ic$ we can write:
    \begin{align*}
        \frac{1}{k}\sum_{s=1}^{k} F_{(s)} E_{(k-s)}  &= \sum_{s=1}^{k-1} \frac{F_{(s)}}{k} E_{(k-s)} + \frac{F_{(k)}}{k} \\
        &= \sum_{s=1}^{k-1} \sum_{n=1}^{k-s} \sum_{\substack{{k_1,\dots,k_{n} = 1}\\{\text{s.t. }\sum_{j=1}^{n}k_{j}=k-s}}}^{k} \frac{F_{(s)}}{k} \frac{ F_{(k_1)}\dots F_{(k_n)}}{(k-s)(\prod_{j=1}^{n-1} \sum_{h=n-j+1}^n k_h)} + \frac{F_{(k)}}{k} \\
        &= \sum_{s=1}^{k-1}\sum_{n=1}^{k-1} \sum_{\substack{{k_1,\dots,k_{n} = 1}\\{\text{s.t. }\sum_{j=1}^{n}k_{j}=k-s}}}^{k} \frac{F_{(s)}}{k} \frac{ F_{(k_1)}\dots F_{(k_n)}}{(k-s)(\prod_{j=1}^{n-1} \sum_{h=n-j+1}^n k_h)} + \frac{F_{(k)}}{k}  
    \end{align*}
    where we used the fact that $\sum_{n=k-s+1}^{k-1} \sum_{\substack{{k_1,\dots,k_{n} = 1}\\{\text{s.t. }\sum_{j=1}^{n}k_{j}=k-s}}}^{k} \frac{F_{(s)}}{k} \frac{ F_{(k_1)}\dots F_{(k_n)}}{(k-s)(\prod_{j=1}^{n-1} \sum_{h=n-j+1}^n k_h)} = 0$ since for $n>k-s$ we have that $\sum_{j=1}^{n}k_{j} > k-s$ for any set of $k_j$. This implies that
    \begin{align*}
        \frac{1}{k}\sum_{s=1}^{k} F_{(s)}E_{(k-s)}  
        &= \sum_{n=1}^{k} \sum_{s=1}^{k-1} \sum_{\substack{{k_1,\dots,k_{n} = 1}\\{\text{s.t. }\sum_{j=1}^{n}k_{j}=k-s}}}^{k} \frac{F_{(s)}}{k} \frac{ F_{(k_1)}\dots F_{(k_n)}}{(k-s)(\prod_{j=1}^{n-1} \sum_{h=n-j+1}^n k_h)} +\frac{F_{(k)}}{k} \\
        &= \sum_{n=1}^{k-1} \sum_{k_1=1}^{k-1} \sum_{\substack{{k_2,\dots,k_{n+1} = 1}\\{\text{s.t. }\sum_{j=2}^{n+1}k_{j}=k-k_1}}}^{k} \frac{F_{(k_1)}}{k} \frac{ F_{(k_2)}\dots F_{(k_{n+1})}}{\underbrace{(k-k_1)}_{\sum_{h=2}^{n+1}k_h}(\prod_{j=1}^{n-1} \sum_{h=n-j+1}^n k_{h+1})} +\frac{F_{(k)}}{k} \\
        &= \sum_{n=1}^{k-1} \sum_{k_1=1}^{k-1} \sum_{\substack{{k_2,\dots,k_{n+1} = 1}\\{\text{s.t. }\sum_{j=1}^{n+1}k_{j}=k}}}^{k} \frac{F_{(k_1)} F_{(k_2)}\dots F_{(k_{n+1})}}{k(\prod_{j=1}^{n} \sum_{h=n-j+2}^{n+1} k_{h})} +\frac{F_{(k)}}{k} \\
        &= \sum_{n=2}^{k} \sum_{k_1=1}^{k-1} \sum_{\substack{{k_2,\dots,k_{n} = 1}\\{\text{s.t. }\sum_{j=1}^{n}k_{j}=k}}}^{k} \frac{F_{(k_1)} F_{(k_2)}\dots F_{(k_{n})}}{k(\prod_{j=1}^{n-1} \sum_{h=n-j+1}^{n} k_{h})}+\frac{F_{(k)}}{k}\\
        &= \sum_{n=1}^{k} \sum_{k_1=1}^{k} \sum_{\substack{{k_2,\dots,k_{n} = 1}\\{\text{s.t. }\sum_{j=1}^{n}k_{j}=k}}}^{k} \frac{F_{(k_1)} F_{(k_2)}\dots F_{(k_{n})}}{k(\prod_{j=1}^{n-1} \sum_{h=n-j+1}^{n} k_{h})} = E_{(k)},
    \end{align*}    
    thus concluding the proof.
\end{proof}

Note that he technical reason why we choose to assume $F_t$ to be polynomial ($N<\infty$) instead of analytic is because in this proof we used in a couple of instances the property that $\int_0^t \sum_{k=0}^N s^k F_{(k+1)} ds = \sum_{k=0}^N \int_0^t s^k F_{(k+1)} ds $ which trivially holds if $N<\infty$. To lift this assumption we would need to assume some extra properties about the operators $F_{(k+1)}$ so that the order between the summation and integral can be changed.

\section{Calculations for the spin boson model}
\label{sec:spin_boson_model_calculations}
We can write  
\[H = \underbrace{\one_S}_{\equiv S_0} \otimes \underbrace{H_B}_{\equiv E_0} + \underbrace{\sigma_z}_{S_1} \otimes \underbrace{\left(\frac{g}{2}\one + \sum_k \lambda_k b_k + \overline{\lambda}_k b_k^\dag\right)}_{\equiv E_1}\]
so that the second order term becomes $F_{(2)}(\mu) = \sum_{j,k=0}^1 \chi_{j,k}(\rho_B) \left(S_j\mu S_k - \frac{1}{2}\{S_j^\dag S_k, \mu\}\right)$
where $\chi_{j,k}(\rho_B) = \expect{E_k E_j}_B - \expect{E_j}_B \expect{E_k}_B$. 
We can then start by noting that \(\expect{E_1}_B = g/2\) and 
\begin{align*}
    \expect{H_B}_B &= \sum_{k=1}^{N_B} \omega_k \left( \expect{b_k^\dag b_k}_B + \expect{\one}_B/2\right)= \sum_{k=1}^{N_B} \omega_k \left( \frac{1}{e^{\beta \omega_k}-1} + \frac{1}{2}\right) 
    = \sum_{k=1}^{N_B} \frac{\omega_k}{2} \frac{ e^{\beta \omega_k}+1}{e^{\beta \omega_k}-1}
\end{align*} 
in order to compute
\begin{align*}
    \chi_{1,0}(\rho_B) &= \expect{\left(\frac{g}{2}\one + \sum_j\lambda_j b_j + \overline{\lambda}_jb_j^\dag\right)\sum_k\omega_k\left(n_k+\frac{\one}{2}\right)}_B - \frac{g}{2} \expect{H_B}_B\\
    &= \sum_k \frac{g}{2}\omega_k (\expect{n_k}_B + \um) +\sum_{j}\lambda_j\omega_k \cancel{(\expect{b_jn_k}_B +\expect{b_j}_B)} + \overline{\lambda}_j\omega_k\cancel{\left(\expect{b_j^\dag n_k} + \expect{b_j^\dag}_B\right)} - \frac{g}{2} \expect{H_B}_B\\
    & = \sum_k \frac{g}{2}\omega_k \left(\frac{1}{e^{\beta \omega_k}-1} + \um\right) - \frac{g}{2} \frac{\omega_k}{2} \frac{ e^{\beta \omega_k}+1}{e^{\beta \omega_k}-1} = 0
\end{align*}
as $\expect{b_jn_k}_B = \expect{b_j^\dag n_k}_B =0$.

This fact directly implies that $F_{(2)}(\mu) = \sum_{j=0}^1 \chi_{j,j}(\rho_B)\Dc_{S_j}(\mu)$ and, since $S_0=\one_S$ we have that $\Dc_{S_0} = 0$ thus recovering $F_{(2)}(\mu) = \chi_{1,1}(\rho_B)\Dc_{\sigma_z}(\mu)$. It then remains to compute the coefficient
\begin{align*}
    \chi_{1,1}(\rho_B) &= \expect{\left(\frac{g}{2}\one + \sum_k \lambda_k b_k + \overline{\lambda}_k b_k^\dag\right)\left(\frac{g}{2}\one + \sum_k \lambda_k b_k + \overline{\lambda}_k b_k^\dag\right)}_B - \frac{g^2}{4}\\
    &= \cancel{\frac{g^2}{4}\expect{\one}_B} + \sum_k g\left(\lambda_k \cancel{\expect{b_k}_B} + \overline{\lambda}_k \cancel{\expect{b_k^\dag}_B}\right) \\&\qquad +\sum_{j,k} \left(\lambda_j\lambda_k \cancel{\expect{b_jb_k}}_B + \lambda_j\overline{\lambda}_k \expect{b_jb_k^\dag}_B + \overline{\lambda}_j\lambda_k \expect{b_j^\dag b_k}_B + \overline{\lambda_j}\overline{\lambda}_k \cancel{\expect{b_j^\dag b_k^\dag}_B}\right)  - \cancel{\frac{g^2}{4}}\\
    &= \sum_{k=1}^{N_B} \frac{2|\lambda_k|^2}{e^{\beta \omega_k}-1} \equiv \varphi.
\end{align*}

\end{document}